\newcommand\blfootnote[1]{%
	\begingroup
	\renewcommand\thefootnote{}\footnote{#1}%
	\addtocounter{footnote}{-1}%
	\endgroup
}
\@nx\else[{#1}]\fi}%
\let\doendproof\endproof
\renewcommand\endproof{\null \hfill $\qed$\doendproof}
\author{Henry Förster\inst{1}\orcidID{0000-0002-1441-4189} \and Robert Ganian\inst{2}\orcidID{0000-0002-7762-8045} \and Fabian Klute\inst{2}\orcidID{0000-0002-7791-3604} \and Martin N\"ollenburg\inst{2}\orcidID{0000-0003-0454-3937}}
\title{On Strict (Outer-)Confluent Graphs}
\institute{
	University of Tübingen, Tübingen, Germany
	\email{foersth@informatik.uni-tuebingen.de}
	\and 
	Algorithms and Complexity Group, TU Wien, Vienna, Austria
	\email{\{rganian,fklute,noellenburg\}@ac.tuwien.ac.at}
}
\begin{document}
	\nocite{eppstein2005delta}
	\nocite{yu1995efficient}
	\nocite{trotter2001combinatorics}
	\nocite{halldorsson2011alternation}
	
	\maketitle
	
	\begin{abstract}
		A strict confluent (SC) graph drawing is a drawing of a graph with vertices as points in the plane, where vertex adjacencies are represented not by individual curves but rather by unique smooth paths through a planar system of junctions and arcs.
		If all vertices of the graph lie in the outer face of the drawing, the drawing is called a strict outerconfluent (SOC) drawing.
		SC and SOC graphs were first considered by Eppstein et al.\ in Graph Drawing 2013.
		Here, we  establish several new relationships between the class of SC graphs and other graph classes, in particular string graphs and unit-interval graphs. 
		Further, we extend earlier results about special bipartite graph classes to the notion of strict outerconfluency, show that SOC graphs have cop number two, and establish that tree-like ($\Delta$-)SOC graphs have bounded cliquewidth.		
		\blfootnote{A poster containing some of the results of this paper was presented at GD 2017. Robert Ganian acknowledges support by the Austrian Science Fund (FWF, project P31336) and is also affiliated with FI MUNI, Brno, Czech Republic.}
	\end{abstract}
	
	\section{Introduction}\label{sec:introduction}
	Confluent drawings of graphs are geometric graph representations in the Euclidean plane, in which vertices are mapped to points, but edges are not drawn as individually distinguishable geometric objects. 
	Instead, an edge between two vertices $u$ and $v$ is represented by 
	a smooth path between the points of $u$ and $v$ through a crossing-free system of arcs and junctions. 
	Since multiple edge representations may share some arcs and junctions of the drawing, this allows dense and non-planar graphs to be drawn in a plane way (e.g., see Fig.~\ref{fig:junctions} for a confluent drawing of $K_5$). 
	Hence confluent drawings can be seen as theoretical counterpart of heuristic edge bundling techniques, which are frequently used in network visualizations to reduce visual clutter in layouts of dense graphs~\cite{brhmd-tueb-17,h-hebvarhd-06}.
	
	More formally, a \emph{confluent drawing} $D$ of a graph $G=(V,E)$ consists of a set of points representing the vertices of $ G $, a set of junction points, and a set of smooth arcs, such that each arc starts and ends at either a vertex point or a junction, no two arcs intersect (except at common endpoints), and all arcs meeting in a junction share the same tangent line in the junction point.
	There is an edge $(u,v) \in E$ if and only if there is a smooth path from $u$ to $v$ in $D$ not passing through any other vertex.
	
	\begin{figure}[tb]
		\centering
		\includegraphics{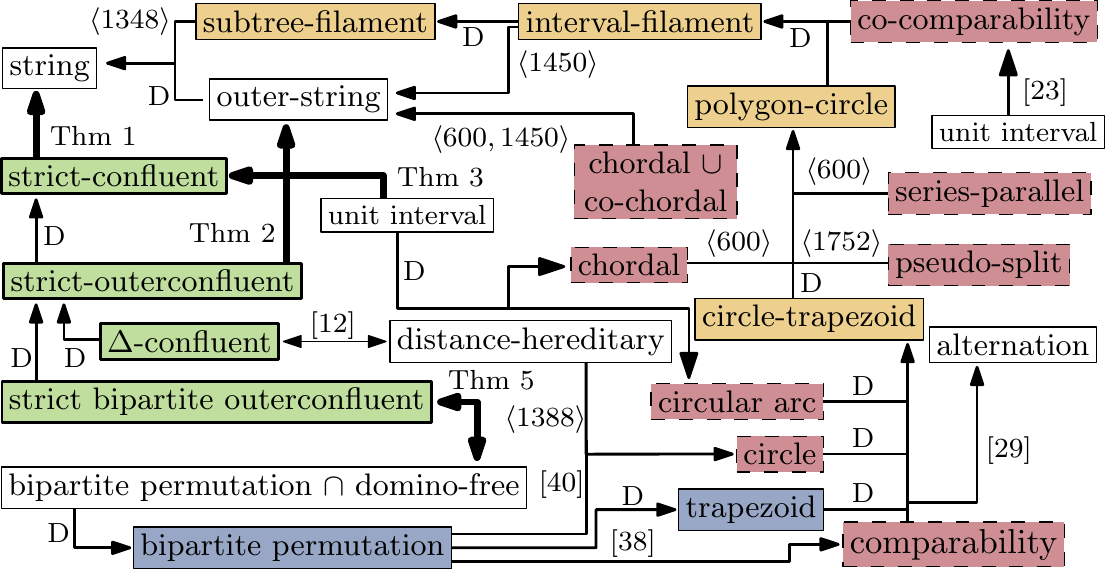}
		\caption{Inclusions among graph classes related to SOC graphs. Arrows point from sub- to superclass, where edge label `D' marks an inclusion by definition. Fat arrows are inclusions shown in this paper and are labelled with the corresponding theorem. Green boxes are confluent graph classes. Red, dashed boxes are classes that are incomparable to SOC graphs. Orange boxes are classes that are potential superclasses of SOC graphs. Blue boxes are potential subclasses of the SOC graphs. The numbers in $ \langle \cdot \rangle $ indicate references of \url{graphclasses.org}.
		}\label{fig:hierarchy}
	\end{figure}	
	
	Confluent drawings were introduced by Dickerson et al.~\cite{degm-cdvndp-05}, who identified classes of graphs that admit or do not admit confluent drawings. Subsequently, the notions of strong and tree confluency have been introduced~\cite{hui2007train}, as well as $\Delta$-confluency~\cite{eppstein2005delta}. Confluent drawings have further been used for drawings of layered graphs~\cite{egm-cld-07} and Hasse diagrams~\cite{eppstein2013hasse}. 
	Eppstein et al.~\cite{eppstein2016strict} defined the class of strict confluent (SC) drawings, which require that every edge of the graph must be represented by a unique smooth path and that there are no self-loops. They showed that for general graphs it is \NP-complete to decide whether an SC drawing exists.
	An SC drawing is called \emph{strict outerconfluent} %
	(SOC) if all vertices lie on the boundary of a (topological) disk that contains the SC drawing.
	For graphs with a given cyclic vertex order, Eppstein et al.~\cite{eppstein2016strict} presented a constructive efficient algorithm for testing the existence of an SOC drawing.
	Without a given vertex order, neither the recognition complexity nor a characterization of such graphs is known.

	We approach the characterization problem by comparing the SOC graph class with a hierarchy of classes of intersection graphs. In general a \emph{geometric intersection graph} $G=(V,E)$ is a graph with a bijection between the vertices $V$ and a set of geometric objects such that two objects intersect if and only if the corresponding vertices are adjacent. Common examples include interval graphs, string graphs~\cite{ehrlich1976intersection}
	and circle graphs~\cite{gabor1989recognizing}.
	Since confluent drawings make heavy use of intersecting curves to represent edges in a planar way, it seems natural to ask what kind of geometric intersection models can represent a confluent graph.

	\medskip
	
	\noindent{\bf Contributions.}
	After introducing basic definitions and properties in Section~\ref{sec:preliminaries}, we show in Section~\ref{sec:string} that SC and SOC graphs are, respectively, string and outerstring graphs\cite{k-sgincngi-91}.
	Section~\ref{sec:interval} shows that every unit interval graph~\cite{roberts1969indifference,wegner1967eigenschaften} can be drawn strict confluent.
	In Section~\ref{sec:bipartite}, we consider the so-called strict bipartite-outerconfluent drawings: by following up on an earlier result of Hui et al.~\cite{hui2007train}, we show that graphs which admit such a drawing are precisely the domino-free bipartite permutation graphs. 
	Inspired by earlier work of Gaven\v ciak et al.~\cite{DBLP:conf/isaac/GavenciakJKK13}, we examine in Section~\ref{sec:cops} the cop number of SOC graphs and show that it is at most two. %
	In Appendix~\ref{sec:nonincl} we show that many natural subclasses of outer-string graphs are incomparable to SOC graphs (see red, dashed boxes in Fig.~\ref{fig:hierarchy}).
	More specifically, we show that 
	circle \cite{gabor1989recognizing},
	circular-arc \cite{hsu1985maximum},
	series-parallel \cite{takamizawa1982linear},
	chordal \cite{gavril1972algorithms},
	co-chordal \cite{benzaken1990more}, and co-comparability \cite{GOLUMBIC198337}
	graphs are all incomparable to SOC graphs. This list may help future research by excluding a series of natural candidates for sub- and super-classes of SOC graphs. Finally, in Section~\ref{sec:cliquewidth}, we show that the cliquewidth of so-called tree-like $ \Delta $-SOC graphs is bounded by a constant, generalizing a previous result of Eppstein et al.~\cite{eppstein2005delta}.

	\section{Preliminaries}\label{sec:preliminaries}
	
	A \emph{confluent diagram} $ D = (N,J,\Gamma) $ in the plane $\mathbb R^2$ consists of a set $ N $ of points called \emph{nodes}, a set $ J $ of points called \emph{junctions} and a set $ \Gamma $ of simple smooth curves called \emph{arcs} whose endpoints are in $ J \cup N $. Further, two arcs may only intersect at common endpoints. If they intersect in a junction they must share the same tangent line, see Fig~\ref{fig:junctions}.
	
	\begin{wrapfigure}[13]{r}{.4\textwidth}
		\centering
		\vspace{-9mm}
		\includegraphics[scale=1]{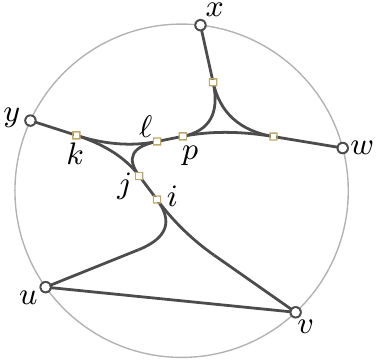}
		\caption{A strict outerconfluent diagram representing $K_5$. Nodes are disks, junctions are squares.}
		\label{fig:junctions}
	\end{wrapfigure}
	
	Let $ D = (N,J,\Gamma) $ be a confluent diagram and let $ u,v \in N $ be two nodes. A \emph{$uv$-path} $ p =  (\gamma_0,\dots,\gamma_k) $ %
	in $ D $ is a sequence of arcs 
	$ \gamma_0 = (u,j_1), \gamma_1=(j_1,j_2), \dots, 
	\gamma_k=(j_{k},v) \in \Gamma $
	such that $ j_1,\dots j_{k} $ are junctions and $ p $ is a smooth curve. %
	In Fig.~\ref{fig:junctions} the unique  $uy$-path passes through junctions $i,j,k$.
	If there is at most one $uv$-path for each pair of nodes $u,v$ in $N$ and if there are no self-loops, i.e., no $uu$-path for any $u \in N$, we say that $D$ is a \emph{strict} confluent diagram.
	The uniqueness of $uv$-paths and the absence of self-loops imply that every $uv$-path is actually a path in the graph-theoretic sense, where no vertex is visited twice.
	We further define $ P(D) $ as the set of all smooth paths between all pairs of nodes in $ N $. Let $ p \in P(D) $ be a path and $ j \in J $ a junction in $ D $, then we write $ j \in p $, if $ p $ passes through $ j $.

	As observed by Eppstein et al.~\cite{eppstein2016strict}, we may assume that every junction is a \emph{binary} junction, where exactly three arcs meet such that the three enclosed angles are $180^\circ, 180^\circ, 0^\circ$.
	In other words two arcs from the same direction merge into the third arc, or, conversely, one arc splits into two arcs.
	A (strict) confluent diagram with higher-degree junctions can easily be transformed into an equivalent (strict) one with only binary junctions.

	Let $j \in J$ be a binary junction with the three incident arcs $\gamma_1, \gamma_2, \gamma_3$. 
	Let the angle enclosed by $\gamma_1$ and $\gamma_2$ be $0^\circ$ and the angle enclosed by $\gamma_3$ and $\gamma_1$ (or $\gamma_2$) be $180^\circ$. 
	Then we say that $j$ is a \emph{merge-junction} for $\gamma_1$ and $\gamma_2$ and a \emph{split-junction} for $\gamma_3$. 
	We also say that $\gamma_1$ and $\gamma_2$ \emph{merge} at $j$ and that  $\gamma_3$ \emph{splits} at $j$.
	Given two nodes $u,v \in N$ and a junction $j \in J$ we say that $j$ is a merge-junction for $u$ and $v$ if there is a third node $w \in N$, a $uw$-path $p$ and a $vw$-path $q$ such that $j \in p$ and $j \in q$, the respective incoming arcs $\gamma_p = (j_p,j)$ and $\gamma_q = (j_q,j)$ are distinct and the suffix paths of $p$ and $q$ from $j$ to $w$ are equal. 
	Conversely, we say that a junction $j \in J$ is a split-junction for a  node $u \in N$ if there are two nodes $v,w \in N$, a $uv$-path $p$, and a $uw$-path $q$ such that $j \in p$ and $j \in q$, the prefix paths of $p$ and $q$ from $u$ to $j$ are equal and the respective subsequent arcs $\gamma_p = (j,j_p)$ and $\gamma_q = (j,j_q)$ are distinct.
	In Fig.~\ref{fig:junctions}, junction $i$ is a merge-junction for $u$ and $v$, while it is a split junction for each of $w,x,y$. Two junctions $ i,j \in J $ are called a \emph{merge-split pair} if $ i $ and $ j $ are connected by an arc $\gamma$ and both $i$ and $j$ are split-junctions for $\gamma$; in Fig.~\ref{fig:junctions}, junctions $i$ and $j$ form a merge-split pair, as well as junctions $\ell$ and $p$.

	We call an arc $ \gamma \in \Gamma $ \emph{essential} if we cannot delete $ \gamma $ without changing adjacencies in the represented graph. We call a confluent diagram $ D $ \emph{reduced}, if every arc is essential. 
	Notice that this is a different notion than strictness, since it is possible that in a confluent diagram we find two essential arcs between a pair of nodes.
	Without loss of generality we can assume that the nodes of an outerconfluent diagram are placed on a circle with all arcs and junctions inside the circle. We can infer a \emph{cyclic order} $ \pi $ from an outerconfluent diagram $ D $ by walking clockwise around the boundary of the unbounded face and adding the nodes to $\pi$ in the order they are visited. 
	
	From a confluent diagram $ D = (N,J,\Gamma) $ we derive a simple, undirected graph $G_D = (V_D, E_D) $ with $V_D = N$ and $E_D = \{ (u,v) \mid \exists\, uv\text{-path } p \in P(D) \}$. We say $D$ is a confluent drawing of a graph $G$ if $G$ is isomorphic to $G_D$ and that $G$ is a (strict) (outer-)confluent graph if it admits a (strict) (outer-)confluent drawing.
	
	\section{Strict (Outer-)Confluent $\subset$ (Outer-)String}\label{sec:string}
	The class of \emph{string graphs}~\cite{k-sgincngi-91} contains all graphs $ G = (V,E) $ which can be represented as the intersection graphs of open curves in the plane. We show that they form a superclass of SC graphs and
	that every SOC graph is an outer-string graph~\cite{k-sgincngi-91}. \emph{Outer-string} graphs are string graphs that can be represented so that strings lie inside a disk and intersect the boundary of the disk in one endpoint. 
	Note that strings are allowed to self-intersect and cross each more than once.
	
	Let $ D = (N,J,\Gamma) $ be a strict confluent diagram. For every node $ u \in N $ we construct the \emph{junction tree} $ T_u $ of $ u $, with root $u$ and a leaf for each neighbor $v$ of $ u $ in $ G_D $. The interior vertices of $ T_u $ are the junctions which lie on the (unique) $uv$-paths. The strictness of $D$ implies that $T_u$ is a tree. Observe that every internal node of $T_u$ has at most two children. Further, every merge-junction for $ u $ is a vertex with one child in $ T_u $, and every split-junction for $u$ has two children. For every junction $ j $ in $T_u$ we can define the sub-tree $ T_{u,j} $ of $T_u$ with root $ j $.

	\begin{restatable}{lemma}{obsindepsubtrees}
		\label{obs:indep_subtrees}
		Let $ D = (N,J,\Gamma) $ be a strict confluent diagram, let $ u,v \in N $ be two nodes and let $i, j$ be two distinct merge-junctions for $u,v$. Then $i$ is neither an ancestor nor a descendant of $j$ in $T_u$ (and, by symmetry, in $T_v$).
	\end{restatable}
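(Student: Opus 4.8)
The plan is to argue by contradiction and to reduce everything to a single case. Since $i$ and $j$ play symmetric roles (both are merge-junctions for the same pair $u,v$), it suffices to show that $i$ cannot be a proper ancestor of $j$ in $T_u$; exchanging $i$ and $j$ then also rules out the descendant case, and exchanging $u$ and $v$ (the definition of a merge-junction is symmetric in the two nodes) yields the statement for $T_v$. First I would fix the witnesses from the definition of a merge-junction: for $i$ a node $w_i$ together with a $uw_i$-path $p_i$ and a $vw_i$-path $q_i$ that enter $i$ through distinct arcs $a_i$ (on $p_i$) and $b_i$ (on $q_i$) and share a suffix $\sigma_i$ from $i$ to $w_i$ whose first arc I call $\tau_i$; and analogously $w_j,p_j,q_j,a_j,b_j,\sigma_j,\tau_j$ for $j$. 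The local structure of the binary junction $i$ is the crucial ingredient: since $p_i$ and $q_i$ both continue from $i$ along $\tau_i$, the arcs $a_i,b_i$ are exactly the two arcs that merge into the split arc $\tau_i$, so each of $a_i\to\tau_i$ and $b_i\to\tau_i$ is a smooth ($180^\circ$) transition; the same holds at $j$ for $a_j\to\tau_j$.

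Assuming $i$ is a proper ancestor of $j$ in $T_u$, I would exploit that a merge-junction for $u$ has a single child in $T_u$ (as noted after the definition of the junction tree), and that this child is reached along $\tau_i$. Consequently the $u$-to-$j$ path in $T_u$ factors as $A\cdot R$, where $A$ is the $u$-to-$i$ path entering $i$ via $a_i$ and $R$ is an $i$-to-$j$ segment that begins with $\tau_i$ and ends by entering $j$ through its parent arc $a_j$. The heart of the argument is then a reroute: let $B$ be the $v$-to-$i$ prefix of $q_i$ (entering $i$ via $b_i$) and consider the arc sequence $B\cdot R\cdot\sigma_j$. I would check that it is a smooth curve from $v$ to the node $w_j$ passing only through junctions: the transition at $i$ is now $b_i\to\tau_i$, which is smooth; along $R$ smoothness is inherited from the $u$-path $A\cdot R$ since it depends only on the arcs from $\tau_i$ onward; and the transition at $j$ is $a_j\to\tau_j$, which is smooth. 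Hence $B\cdot R\cdot\sigma_j$ is a $vw_j$-path that enters $j$ through $a_j$, whereas the witness $q_j$ is a $vw_j$-path entering $j$ through $b_j\neq a_j$. These two $vw_j$-paths are distinct, contradicting the strictness of $D$.

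The main obstacle is the purely local smoothness bookkeeping at the two junctions. Everything hinges on the fact that at a merge-junction the two merging arcs each make an angle of $180^\circ$ with the common split arc, so that after deleting the $u$-side approach $a_i$ and gluing on the $v$-side approach $b_i$ the continuation onto $\tau_i$ stays smooth (and symmetrically for the $a_j\to\tau_j$ step that lets me reattach the shared suffix $\sigma_j$). The second point that needs care is verifying that $R$ genuinely begins at $\tau_i$ and ends at $a_j$: this is exactly where the single-child property of merge-junctions in $T_u$ enters, since otherwise the $u$-to-$j$ path could leave $i$ along a different arc and the spliced curve $B\cdot R\cdot\sigma_j$ would fail to be smooth at $i$. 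Once these two points are settled, the contradiction with the uniqueness of the $vw_j$-path is immediate.
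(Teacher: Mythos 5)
Your proof is correct and follows essentially the same route as the paper: assuming $i$ is an ancestor of $j$ in $T_u$, you build one smooth path from $v$ that merges at $i$ and descends through $T_u$ to $j$, and contrast it with the witness path from $v$ that merges directly at $j$, contradicting strictness. The only difference is that you make explicit what the paper leaves implicit, namely the $180^\circ$ smoothness checks at $i$ and $j$ and the extension of both paths along the shared suffix $\sigma_j$ to the node $w_j$ so that the contradiction is literally with the uniqueness of $vw_j$-paths.
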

	To create a string representation of an SC graph we trace the paths of a strict confluent diagram $ D = (N,J,\Gamma) $, starting from each node $ u \in N $ and combine them into a string representation.  Figure~\ref{fig:traces} shows an example. We traverse the junction tree for each $ u \in N $ on the left-hand side of each arc (seen from its root $u$) and create  a string $ t(u) $, the \emph{trace} of $ u $, with respect to $ T_u $ as follows.
	
	Start from $ u $ and traverse $T_u$ in left-first DFS order.
	Upon reaching a leaf $ \ell $ make a clockwise U-turn and backtrack to the previous split-junction of $T_u$.
	When returning to a split-junction we have two cases. (a) coming from the left subtree: cross the arc from the left subtree at the junction and descend into the right subtree. (b) coming from the right subtree: cross the arc to the left subtree again and backtrack upward in the tree along the existing trace to the previous split-junction of $T_u$.
	
	\begin{wrapfigure}[12]{r}{.55\textwidth}
		\vspace{-1.05cm}
		\centering
		\includegraphics[scale=1]{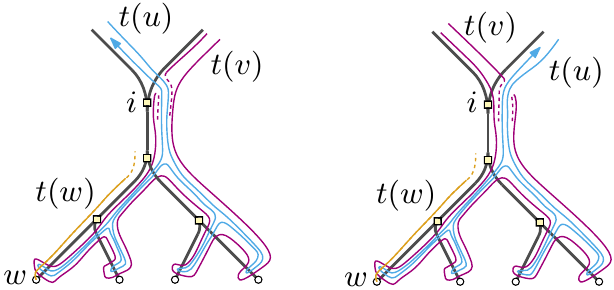} 
		\caption{Two possible configurations for inserting a new trace $t(u)$ that meets an existing trace $t(v)$ at a merge junction $i$; $t(v)$ is cut 
			and re-routed.}
		\label{fig:traces}
	\end{wrapfigure}
	
	Finally, at a merge-junction $i$ with at least one trace from the other arc merging into $i$ already drawn:
	Let $ v \in N $  such that $ u $ and $ v $ merge at $ i $ and $t(v)$ is already tracing the subtree $T_{u,i} = T_{v,i}$. 
	In this case we temporarily cut open the part of trace $t(v)$ closest to $t(u)$, route $t(u)$ through the gap and let it follow  $t(v)$ along $T_{u,i}$ until it returns to junction $i$, where $t(u)$ passes through the gap again. 
	Since $T_{u,i} = T_{v,i}$ this is possible without $t(u)$ intersecting $t(v)$.
	Now it remains to reconnect the two open ends of $t(v)$, but this can again be done without any new intersections by winding $t(v)$ along the ``outside'' of $t(u)$. See Fig.~\ref{fig:traces} for an illustration. 
	If there are multiple traces with this property, they can all be treated as a single ``bundled'' trace within $T_{u,i}$.

	\begin{restatable}{theorem}{thmsc}
		\label{thm:sc}
		Every SC graph is a string graph. 
	\end{restatable}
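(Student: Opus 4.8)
The plan is to show that the family of traces $\{t(u)\mid u\in N\}$ constructed above is a string representation of $G_D$; that is, for any two nodes $u,v$ the curves $t(u)$ and $t(v)$ intersect if and only if $(u,v)\in E_D$. Since strings may self-intersect and cross one another repeatedly, I do not need each $t(u)$ to be simple: it suffices to check that the left-first DFS traversal, together with the U-turns at leaves, the arc crossings at split-junctions, and the cut-and-reroute operation at merge-junctions, produces a single well-defined curve $t(u)$ for every node $u$. This follows directly from the fact that the traversal descends into and returns from every arc of $T_u$ exactly once on each side, so the trace is a connected curve anchored at $u$.

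For the forward direction, suppose $(u,v)\in E_D$. Then there is a unique $uv$-path $p$, so $v$ is a leaf of $T_u$ and $u$ is a leaf of $T_v$, both reached by tracing along $p$. Upon reaching its leaf at the node point $v$, the trace $t(u)$ performs a clockwise U-turn that loops around $v$; since $t(v)$ is anchored at $v$ and emanates from it, this loop necessarily crosses $t(v)$, which yields the required intersection. Symmetrically, the U-turn of $t(v)$ around $u$ witnesses the same adjacency, so I expect this direction to be routine.

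The harder direction, and the step I expect to be the main obstacle, is to prove that $t(u)$ and $t(v)$ do \emph{not} intersect when $(u,v)\notin E_D$. The first step would be to localize all possible interactions: two traces can only run close to one another along arcs lying in both $T_u$ and $T_v$, and such shared arcs are exactly the subtrees $T_{u,i}=T_{v,i}$ rooted at the merge-junctions $i$ for $u$ and $v$. Here I would invoke Lemma~\ref{obs:indep_subtrees}: the merge-junctions of $u,v$ are pairwise incomparable in $T_u$, hence the shared subtrees are vertex-disjoint and each can be analyzed in isolation (treating bundled traces as one, as in the construction). Within each shared subtree the cut-and-reroute step routes $t(u)$ strictly parallel to the already-drawn $t(v)$ and reconnects $t(v)$ along the outside, so that the two traces are nested and never cross (cf.\ Fig.~\ref{fig:traces}); in particular, at a common neighbor $w$ of $u$ and $v$ the two U-turns around $w$ are nested rather than crossing.

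Finally, away from the shared subtrees the two traces occupy disjoint neighborhoods of distinct arcs, and since $(u,v)\notin E_D$ there is no $uv$-path; hence no leaf of $T_u$ sits at $v$ and no U-turn of $t(u)$ loops around $v$ (and vice versa), so the forward-direction crossing mechanism cannot occur. Combining these observations gives $t(u)\cap t(v)=\emptyset$, completing the equivalence. The delicate point throughout is the topological bookkeeping for the cut-and-reroute operation and the claim that the shared subtrees are the only places where distinct traces meet; the independence guaranteed by Lemma~\ref{obs:indep_subtrees} is what makes this bookkeeping tractable.
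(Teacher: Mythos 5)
Your proposal is correct and takes essentially the same approach as the paper's proof: you use the same trace construction, the same U-turn mechanism at leaves to realize adjacencies, and the same appeal to Lemma~\ref{obs:indep_subtrees} to treat the disjoint shared subtrees at merge-junctions independently. The paper merely organizes the non-intersection argument as an explicit three-case analysis (adjacent; non-adjacent sharing no merge-junction; non-adjacent sharing one or more merge-junctions), which matches your localization-plus-independence argument in substance.
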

	\begin{proof}
		Given an SC graph $ G = (V,E) $ with a strict confluent drawing $ D = (N,J,\Gamma) $ we construct the traces as described above for every node $ u \in N $. 	In the following let $ u,v $ be two nodes of $ D $. We distinguish three cases.
		
		\textbf{Case 1} ($ uv $-path in $ P(D) $): %
		We draw $t(u)$ and $ t(v) $ as described above. Since there is a $uv$-path in $ P(D) $ we have to guarantee that $ t(u) $ and $ t(v) $ intersect at least once. We introduce crossings at the leaves corresponding to $ u $ and $ v $ in $ T_u $ and $ T_v $ when $t(u)$ and $t(v)$ make a U-turn; see how the trace $t(u)$ 
		intersects $t(w)$ near the leaf $w$ in Fig.~\ref{fig:traces}.

		\textbf{Case 2} (No $ uv $-path in $ P(D) $ and $ u,v $ share no merge-junction): In this case $ T_u $ and $ T_v $ are disjoint trees. Traces can meet only at shared junctions and around leaves, but since $t(u)$ and $t(v)$ trace disjoint trees
		intersections are impossible. 
		
		\textbf{Case 3} (No $ uv $-path in $ P(D) $ and $ u,v $ share a merge-junction): First assume $ u $ and $ v $ share a single merge-junction $ i \in J $ and assume $t(v)$ is already drawn when creating trace $t(u)$. We have to be careful that $ t(v) $ and $ t(u) $ do not intersect. If we route the traces at the merge-junction $i$ as depicted in Fig.~\ref{fig:traces}, they visit the shared subtree $T_{u,i}=T_{v,i}$ without intersecting each other.

		Now assume $ u $ and $ v $ share $k>1$ merge-junctions $ j_1,\dots,j_k \in J $ and $ u $ and $ v $ merge at each $ j_i $. Consequently we find $ k $ shared subtrees $ T^1,\dots, T^k $. By Lemma~\ref{obs:indep_subtrees}, however, we know that the intersection of these subtrees is empty. Hence we can treat every merge-junction and its subtree independently as in the case of a single merge-junction.
		
		These are all the cases how two junction trees can interact. Hence the traces $t(u)$ and $t(v)$ for nodes $ u,v \in N $ intersect if and only if there is a $uv$-path in $ P(D) $ and, equivalently, the edge $ (u,v) \in E_D $. Further, every trace is a continuous curve, so this set of traces yields a string representation of $ G $. 
	\end{proof}

	A construction following the same principle can in fact be used to show: 
	\begin{restatable}{theorem}{thmsoc}
		\label{thm:soc}
		Every SOC graph is an outer-string graph. 
	\end{restatable}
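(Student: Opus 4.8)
The plan is to reuse the trace construction from the proof of Theorem~\ref{thm:sc} and to verify that, when the underlying strict confluent diagram is outerconfluent, the curves it produces form an \emph{outer}-string representation. Concretely, from a strict confluent drawing $D=(N,J,\Gamma)$ of the SOC graph $G$ we build for every node $u$ the trace $t(u)$ by traversing its junction tree $T_u$ in left-first DFS order, exactly as before. Theorem~\ref{thm:sc} already guarantees that this family of curves is a string representation of $G$, that is, $t(u)$ and $t(v)$ cross if and only if $(u,v)\in E_D$; so the entire combinatorial content, including the three-case analysis, is inherited and the only thing left to prove is the geometric claim that every trace lies inside a disk and meets its boundary in exactly one endpoint.

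First I would invoke outerconfluency to place, without loss of generality, all nodes of $N$ on the bounding circle $C$ with every junction and every arc strictly inside the disk bounded by $C$. Since each $T_u$ is made of arcs of $D$ and each trace $t(u)$ runs inside an arbitrarily thin tubular neighbourhood of $T_u$, every trace is contained in the disk up to a small perturbation. A trace $t(u)$ can touch $C$ only in three kinds of places: at its root $u$; at the leaves of $T_u$, which are precisely the neighbours of $u$ and hence also lie on $C$; and, when $u$ has several incident arcs, at the points where $t(u)$ passes close to $u$ while switching between the subtrees hanging off the root.

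The second step is to push every contact of $t(u)$ with $C$ except the one at $u$ slightly into the interior. At each leaf $v$ I perform the clockwise U-turn at a point just inside $C$ instead of exactly at $v$; the edge-crossing between $t(u)$ and $t(v)$ prescribed by Case~1 of Theorem~\ref{thm:sc} is then realised just inside $C$, where the shrunken U-turn of $t(u)$ loops around and crosses the portion of $t(v)$ that leaves $v$ along the shared $uv$-path. Symmetrically, whenever $t(u)$ returns near $u$ between two subtrees, or at the very end of its traversal, I route it just inside $C$, keeping only the initial endpoint of $t(u)$ exactly on $u\in C$ and pulling its final endpoint just inside. After these purely local modifications each trace meets $C$ in the single point $u$, so $\{t(u)\mid u\in N\}$ is an outer-string representation of $G$.

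The main obstacle is to check that pushing the U-turns and the near-root passages inward does not disturb the crossing pattern guaranteed by Theorem~\ref{thm:sc}: one has to verify that shrinking the U-turn at a leaf $v$ still produces exactly one crossing with $t(v)$ and no spurious crossing with any third trace, and that the inward rerouting near $u$ crosses no trace it did not cross before. As all of these modifications take place inside an arbitrarily thin collar of $C$ and only shorten or gently bend curve segments that already ran in a tubular neighbourhood of the arcs of $D$, the intersection pattern is preserved; making this thin-collar argument precise is the only delicate point of the proof.
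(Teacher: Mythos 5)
Your proposal is correct and follows essentially the same route as the paper: reuse the trace construction from Theorem~\ref{thm:sc} verbatim, observe that each trace starts at its node on the bounding circle and never leaves the disk, and conclude that the traces form an outer-string representation. The extra care you take in pushing U-turns and near-root passages into a thin collar inside the circle is a legitimate tightening of a detail the paper simply asserts ``by construction,'' but it does not change the substance of the argument.
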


	\section{Unit Interval Graphs and SC}\label{sec:interval}
	In this section we consider so-called unit interval graphs. Let $ G = (V,E) $ be a graph, then $ G $ is a unit interval graph if there exists a \emph{unit-interval} layout $ \Gamma_{UI} $ of $ G $, i.e. a representation of $ G $ where each vertex $ v \in V $ is represented as an interval of unit length and edges are given by the intersections of the intervals.
	
	\begin{restatable}{theorem}{unitIntervalInSC}
		\label{thm:unitIntervalInSC}
		Every unit-interval graph is an SC graph.
	\end{restatable}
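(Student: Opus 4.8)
The plan is to turn the standard combinatorial representation of a unit-interval graph into an explicit strict confluent diagram. First I would fix an \emph{indifference} (umbrella-free) vertex ordering $v_1,\dots,v_n$, which exists for every unit-interval graph: it has the property that whenever $i<j<k$ and $v_iv_k\in E$, also $v_iv_j\in E$ and $v_jv_k\in E$. This guarantees that each neighborhood is an interval of the ordering and that the indices $l(i)=\min\{j:v_iv_j\in E\}$ and $r(i)=\max\{j:v_iv_j\in E\}$ are both monotone nondecreasing, so for $i<j$ we have $v_iv_j\in E$ exactly when $j\le r(i)$. Two consequences I will rely on are that the rightmost vertex is simplicial with its neighborhood forming a clique, and that the maximal cliques $Q_1,\dots,Q_m$ can be ordered along a path so that every vertex occupies a \emph{contiguous} range of cliques and these ranges slide monotonically to the right.

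Next I would build the diagram from this clique path. Place the nodes left to right and realize each maximal clique $Q_t$ by a small merge--split gadget (as in the $K_5$ drawing of Fig.~\ref{fig:junctions}): the members of $Q_t$ merge into a common arc which then splits back out, so that any two vertices of $Q_t$ are joined by a smooth path. The gadgets for consecutive cliques $Q_t$ and $Q_{t+1}$ are chained along the bundle carrying their shared vertices $Q_t\cap Q_{t+1}$, so a vertex spanning a range of cliques is threaded through the corresponding consecutive gadgets by a single arc. Because the clique ranges only slide (they are never nested in a crossing way), this chain of gadgets can be drawn planarly in the half-plane below the nodes, and every node enters the construction through one port. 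An essentially equivalent route is an induction that repeatedly deletes the simplicial rightmost vertex and re-attaches it to the bundle representing its neighborhood-clique; the induction needs the invariant that the vertices of the current rightmost clique are simultaneously accessible through one free port, which is exactly what the monotone sliding of ranges provides.

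The step I expect to be the real obstacle is \emph{strictness}, i.e.\ arguing that between every pair of nodes there is \emph{at most one} smooth path and that a path exists precisely for adjacent pairs. The danger is visible already for two true twins lying in two common maximal cliques: a careless bundling would join them by a smooth path through each shared gadget, violating uniqueness, and would also let their private neighbors reach one another. The resolution is to use the monotone (equivalently, claw-free) structure to fix, at each junction, which pair of incident arcs forms the straight ($0^\circ$) continuation, so that for an adjacent pair $v_i,v_j$ exactly one gadget---the one for a single canonical shared clique---offers a smooth continuation between them, while for a non-adjacent pair (disjoint clique ranges) no chain of gadgets ever lines up into a smooth path. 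I would finish by checking that all junctions can be taken binary with consistent tangents and that the resulting diagram realizes exactly $E$, so that it is a valid strict confluent drawing of $G$.
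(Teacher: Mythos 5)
Your high-level plan is close in spirit to the paper's (a clique decomposition along the indifference order plus bundled inter-clique connections), and you correctly identify strictness as the central danger; however, the two steps you assert rather than prove are exactly where the content of the proof lies, and as stated they do not hold up. First, the planarity claim. You say the chain of maximal-clique gadgets ``can be drawn planarly in the half-plane below the nodes.'' Observe that a drawing with all nodes on a line in indifference order and all arcs in one half-plane is a strict \emph{outer}confluent drawing with that vertex order, so you would in fact be proving that unit-interval graphs are SOC --- a substantially stronger statement than Theorem~\ref{thm:unitIntervalInSC}, which the paper does not establish. Monotone sliding of the clique ranges does not give you this. Writing $Q_t=[a_t,b_t]$ and $Q_{t+1}=[a_{t+1},b_{t+1}]$ for consecutive maximal cliques sharing at least two vertices, the attachment points interleave on the boundary line ($a_t<a_{t+1}<b_t<b_{t+1}$), so if the two clique gadgets are disjoint curve systems, a Jordan-curve argument forces them to cross inside the half-plane. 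Your escape --- threading each shared vertex through consecutive gadgets by a single arc --- fuses the gadgets into one connected curve system, and that is precisely what breaks your second claim.

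Second, the strictness mechanism is named but never constructed. Once gadgets share arcs, smooth paths can travel from inside one gadget through the threaded bundle into the next, so it no longer suffices to assign each edge a ``canonical'' clique: you must exhibit junction tangents under which (i) two vertices of $Q_t\cap Q_{t+1}$ obtain a smooth path in exactly one of the two gadgets, (ii) a vertex of $Q_t\setminus Q_{t+1}$ never reaches a vertex of $Q_{t+1}\setminus Q_t$ through the shared bundle, while (iii) every vertex of $Q_t\cap Q_{t+1}$ still reaches all of $Q_{t+1}\setminus Q_t$. Declaring that one fixes the $0^\circ$ continuation at each junction is the right kind of tool but not an argument; which gadget realizes which pairs, and how the threading arcs enter and leave each gadget without creating stray smooth paths or crossings, is the whole proof. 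For comparison, the paper sidesteps both obstacles by design: it greedily partitions $V$ into \emph{non-overlapping} ``leader'' cliques (so each node lies in exactly one clique gadget and no threading occurs), realizes each clique by a spine of $\Delta$-junctions, hangs a separate binary junction tree off a split junction $b_\ell$ inserted on the spine for each bundle of inter-clique edges (using exactly your suffix-versus-consecutive-set observation), and achieves planarity by routing these connector arcs \emph{above} the drawing of $C_i$, across the spine line $H$, and back down below $C_{i+1}$ (Figure~\ref{fig:unit_interval}(b)) --- that is, by deliberately leaving the half-plane whose sufficiency you assume.
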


	\begin{proof}[Sketch]
		Our proof technique is constructive and describes how to compute a strict confluent diagram $D$ for a given graph $G$ based on its unit-interval layout $\Gamma_{UI}$. Based on the ordering of intervals in $\Gamma_{UI}$, we first greedily compute a set of cliques which are subgraphs of $G$. In particular, we ensure that the left-to-right-ordered set of cliques has the property that vertices in a clique are only incident to vertices in the same clique and to the two neighboring cliques; see Figure~\ref{fig:unit_interval}(a). We then create an SOC diagram for each clique; see the red, blue and green layouts of the three  cliques in Figure~\ref{fig:unit_interval}(b).

		\begin{figure}[tp]
			\centering
			\includegraphics[width=\textwidth,page=3]{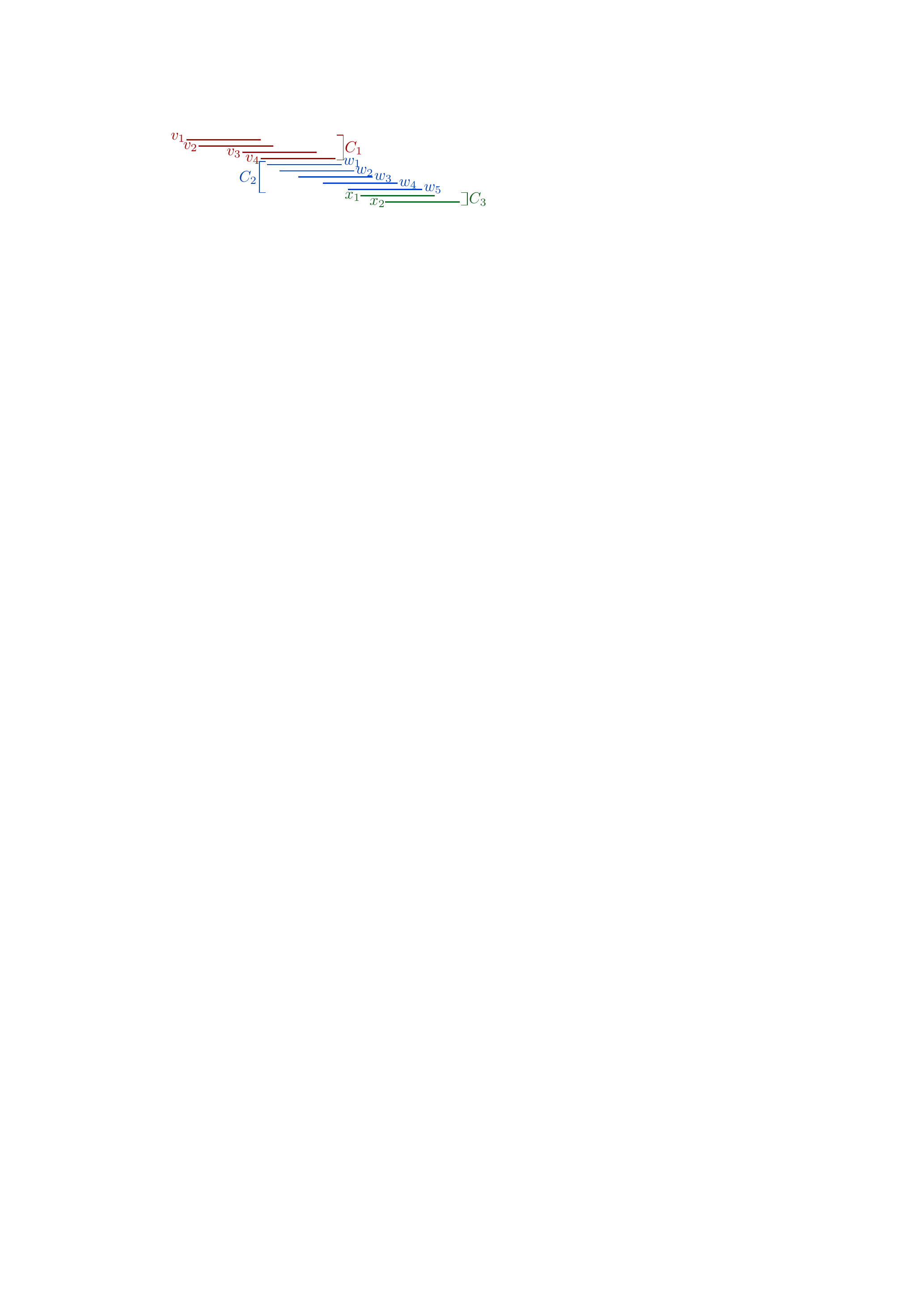}	
			\vspace{-0.3cm}
			\caption{(a)~A unit interval graph $G$ with a decomposition of its vertices into a set of cliques as described in the proof of Theorem~\ref{thm:unitIntervalInSC}; and (b)~a strict confluent layout of $G$ computed by the algorithm described in the proof of Theorem~\ref{thm:unitIntervalInSC}.\vspace{-0.5cm}	}
			\label{fig:unit_interval}
		\end{figure}		
		
		In order to realize the remaining edges we first make the following useful observation. Let $(v_1,\ldots,v_k)$ denote the vertices of some clique $C$ ordered from left to right according to $\Gamma_{UI}$. Then since all vertices are represented by unit intervals, if $v_i$ is incident to a vertex $w$ in the subsequent clique, also $v_j$ must be incident to $w$ for $i < j \leq k$. We use this observation to insert a split junction $b_i$ in the SOC diagram of $C$ such that all vertices with index at least $i$ can access a smooth arc that connects them with $w$; see the black arcs in Figure~\ref{fig:unit_interval}(b). We route arcs between cliques $C_i$ and $C_{i+1}$ first above clique $C_i$, then let it intersect with a line $H$ that passes through all the cliques (which intuitively inverts the ordering of such arcs) and then finish the drawing below clique $C_{i+1}$; refer to Figure~\ref{fig:unit_interval}(b) for an illustration. By adopting this scheme for each pair of consecutive cliques, intersections can be prevented.
	\end{proof}

	\section{Strict Bipartite-Outerconfluent Drawings}\label{sec:bipartite}
	Let $ G$ be a bipartite graph with bipartition $ (X, Y) $. 
	An outerconfluent drawing of $G$ is \emph{bipartite-outerconfluent} if the vertices in $X$ (and hence also $Y$) occur consecutively on the boundary. Graphs admitting such a drawing are called \emph{bipartite-outerconfluent}. The \emph{bipartite permutation} graphs are just the graphs that are bipartite and \emph{permutation} graphs, where a permutation graph is a graph that has an intersection model of straight lines between two parallel lines~\cite{pnueli1971transitive}.

	\begin{theorem}[Hui et al. \cite{hui2007train}]
		\label{thm:bp}
		The class of bipartite-permutation graphs is equal to the class of bipartite-outerconfluent graphs, i.e., the class of bipartite graphs admitting an intersection representation of straight-line segments between two parallel lines.
	\end{theorem}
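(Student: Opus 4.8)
The plan is to prove the two inclusions separately, using the well-known \emph{strong-ordering} characterization of bipartite permutation graphs (due to Spinrad, Brandst\"adt and Stewart) as a bridge. Recall that a bipartite graph $G=(X,Y,E)$ is a bipartite permutation graph if and only if there are linear orders $x_1<\dots<x_m$ of $X$ and $y_1<\dots<y_n$ of $Y$ that are \emph{uncrossing-closed}: whenever $x_iy_j,\,x_{i'}y_{j'}\in E$ with $i<i'$ and $j>j'$, then also $x_iy_{j'},\,x_{i'}y_j\in E$. This closure condition is exactly what decides whether the chords realizing $E$ can be bundled into a crossing-free smooth system, which is the link to outerconfluency. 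The final clause of the statement is immediate, since the class of segment-intersection graphs between two parallel lines \emph{is} the class of permutation graphs by the definition given in the preliminaries; so it suffices to show that bipartite-outerconfluent $=$ bipartite permutation.

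\textbf{Bipartite-outerconfluent $\Rightarrow$ bipartite permutation.} Given a bipartite-outerconfluent diagram $D$ in which the nodes of $X$ and then those of $Y$ appear consecutively on the boundary, I would orient the two arcs so that the boundary order reads $x_1,\dots,x_m,y_n,\dots,y_1$, and show these orders are uncrossing-closed. Take edges $x_iy_j$ and $x_{i'}y_{j'}$ with $i<i'$ and $j>j'$; a short check of the boundary positions shows the four endpoints interleave cyclically as $x_i,x_{i'},y_j,y_{j'}$, so by the Jordan curve theorem the two smooth paths $P\colon x_i\!\rightsquigarrow\! y_j$ and $Q\colon x_{i'}\!\rightsquigarrow\! y_{j'}$ cannot be disjoint. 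Since arcs of a confluent diagram meet only at shared junctions and a node is never an interior point of a $uv$-path, $P$ and $Q$ must meet at some junction $q$. At a binary junction the only smooth transits are the two $180^\circ$ arc-pairs, which share the ``spine'' arc $\gamma_3$; hence both $P$ and $Q$ traverse $\gamma_3$ through $q$, i.e.\ they share a common sub-path from $q$ onward until they split. Splicing the prefix of $P$ (up to $q$) with the suffix of $Q$ (past the splitting junction toward $y_{j'}$) is again smooth, so $x_iy_{j'}\in E$, and symmetrically $x_{i'}y_j\in E$. Thus the boundary orders form a strong ordering and $G$ is a bipartite permutation graph.

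\textbf{Bipartite permutation $\Rightarrow$ bipartite-outerconfluent.} Conversely, start from a strong ordering and place $X$ as $x_1,\dots,x_m$ on the left boundary arc and $Y$ as $y_1,\dots,y_n$ on the right arc. View $E$ as the set of $1$-cells of the $m\times n$ adjacency matrix; the plan is to \emph{partition} these cells into maximal axis-aligned all-ones rectangles (bicliques) and to draw each rectangle as the canonical merge--split gadget of the preliminaries, i.e.\ its $X$-side vertices merge onto a common arc that then splits to its $Y$-side vertices. A vertex lying in several rectangles is attached to the corresponding gadgets through a small binary split-tree. Uncrossing-closure forces the $1$-region to be ``staircase-monotone,'' so the rectangles line up along a common spine and the attaching trees nest rather than cross, keeping the whole diagram planar inside the disk; because every $1$-cell lies in exactly one rectangle, each edge $x_iy_j$ gets exactly one smooth path, and because all arcs run $X\to$ spine $\to Y$ no smooth path can turn back to join two vertices of the same side, ruling out self-loops and spurious edges.

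The cleaner half is the outerconfluent $\Rightarrow$ permutation direction, which reduces to the single observation that planarity of smooth paths forces uncrossing-closure. \emph{The main obstacle is the reverse construction:} mere planarity is easy, but proving \emph{strictness}---that bundling shared neighborhoods through the merge--split chains never creates a second smooth path between some pair, nor a smooth closed walk---requires careful control of how split-trees interleave at the shared spine, and this uniqueness argument (rather than the planar embedding) is where the real work lies.
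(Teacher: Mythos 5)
The paper never proves this statement itself --- it is imported verbatim from Hui, Schaefer and \v{S}tefankovi\v{c} \cite{hui2007train} and used as a black box --- so your proposal can only be judged against the content of the theorem. Your first direction (bipartite-outerconfluent $\Rightarrow$ bipartite permutation) is essentially sound: interleaving endpoints force the two smooth paths to intersect, and at a binary junction the two smooth transits both use the third ``spine'' arc. But you skip one case: the two paths may traverse the shared arc in \emph{opposite} directions, in which case your splice (prefix of $P$, suffix of $Q$) is not smooth; the splices that \emph{are} smooth then produce an $x_i$--$x_{i'}$ path and a $y_j$--$y_{j'}$ path. This case is excluded only because $G_D$ is bipartite with $X,Y$ as its sides, so no smooth path may join two vertices of the same side; with that one extra sentence the direction goes through.

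The second direction contains a genuine gap, located exactly where you sense ``the real work lies.'' First, ``partition into \emph{maximal} all-ones rectangles'' is internally contradictory: maximal rectangles overlap in general. For the domino (sides $x_1,x_2,x_3$ and $y_1,y_2,y_3$, all edges except $x_1y_3$ and $x_3y_1$), the cells $(x_1,y_1)$ and $(x_3,y_3)$ force the maximal rectangles $\{x_1,x_2\}\times\{y_1,y_2\}$ and $\{x_2,x_3\}\times\{y_2,y_3\}$, which share the cell $(x_2,y_2)$. Second, and fatally: any drawing in which every edge is realized by exactly one smooth path and there are no self-loops is by definition a \emph{strict} bipartite-outerconfluent drawing. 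The domino is a bipartite permutation graph, yet the paper's Observation~\ref{obs:dom} shows it admits \emph{no} strict bipartite-outerconfluent drawing --- this is precisely why the paper's own theorem in Section~\ref{sec:bipartite} reads ``bipartite-permutation $\cap$ domino-free.'' So your planarity claim (``the attaching trees nest rather than cross'') must fail for some partition-based drawing; concretely, for the partition $\{x_1,x_2\}\times\{y_1\}$, $\{x_1,x_2,x_3\}\times\{y_2\}$, $\{x_2,x_3\}\times\{y_3\}$ of the domino, the curve $x_2\rightsquigarrow y_1$ of the first gadget and the curve $x_1\rightsquigarrow y_2$ of the second lie in disjoint gadgets (they share no arc, junction, or endpoint) yet their endpoints interleave on the boundary, so they would have to cross. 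The repair is to abandon strictness, which the theorem does not ask for: use the overlapping \emph{maximal} bicliques, accept that an edge such as $x_2y_2$ is represented by several smooth paths, and observe that the strong ordering makes the bicliques' boundary spans nested (in the order $x_1,\dots,x_m,y_n,\dots,y_1$ they nest like parentheses), which yields planarity. Strictness is not the hard part of this theorem; it is false in general, and insisting on it is what breaks your construction.
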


	It is natural to consider the idea of bipartite drawings also in the strict outerconfluent setting. 
	We call a strict outerconfluent drawing $ D $ of $G$ \emph{bipartite} if 
	it is bipartite-outerconfluent and strict. 
	The graphs admitting such a drawing are called \emph{strict bipartite-outerconfluent graphs}. 
	In this section we extend Theorem~\ref{thm:bp} to the notion of strictness.
	The next lemma and observation are required in the proof of our theorem. The \emph{domino} graph is the graph resulting from gluing two 4-cycles together~at~an~edge.

	\begin{restatable}{lemma}{lemcsix}
		\label{lem:c6}
		Suppose that a reduced confluent diagram $ D = (N,J,\Gamma) $ contains two distinct uv-paths. Then we can find in $ G_D = (V_D,E_D) $ a set $ V' \subseteq V_D $ such that $ G[V'] $ is isomorphic to $ C_6 $ with at least one chord. 
	\end{restatable}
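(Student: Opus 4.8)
The plan is to confine the two $uv$-paths to a single ``bubble'' and then read the six vertices of the chorded $C_6$ off the arcs that reducedness forces to be essential. Tracing the two distinct paths $p\neq q$ simultaneously from $u$, let $s$ be the first place where they use different arcs; since both are smooth, $s$ is either $u$ itself or a common split-junction from which $p$ and $q$ leave along the two arcs enclosing the $0^\circ$ angle. Symmetrically, reading from $v$, let $m$ be the first place the reversed paths differ, so that $m$ is $v$ or a common merge-junction. Between $s$ and $m$ the two paths run along \emph{internally disjoint} branches $B_1\subseteq p$ and $B_2\subseteq q$ (disjoint because $m$ is the \emph{first} reconvergence), and $(u,v)\in E_D$ is realised through both of them. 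Hence every arc of $B_1$ and of $B_2$ already carries the $uv$-adjacency.

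The core of the argument is a statement about one branch $B_i$. Since $D$ is reduced, every arc of $B_i$ is essential; but the $uv$-traffic it carries is duplicated by $B_{3-i}$ and can never witness essentiality, so each arc must \emph{in addition} carry an adjacency private to $B_i$. A short case analysis of the two junction types shows that such a private adjacency can only reach a node attached at an interior junction of $B_i$, that a node hanging at a \emph{split} is adjacent to $u$ while a node hanging at a \emph{merge} is adjacent to $v$, and that a split occurring before a merge along $B_i$ would leave the arc between them carrying $uv$-traffic only. Reducedness therefore places all merges before all splits on $B_i$; the first arc out of $s$ forces a split-junction on $B_i$ carrying a neighbour $w_2^{(i)}$ of $u$, and the last arc into $m$ forces a merge-junction carrying a neighbour $w_1^{(i)}$ of $v$. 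Because every merge precedes every split, the forward flow from $w_1^{(i)}$ reaches that split and branches into $w_2^{(i)}$, giving a length-three path $u\,\text{--}\,w_2^{(i)}\,\text{--}\,w_1^{(i)}\,\text{--}\,v$ inside $B_i$.

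It then remains to assemble the two branches. I would take $V'=\{u,v,w_1^{(1)},w_2^{(1)},w_1^{(2)},w_2^{(2)}\}$ and observe that it carries the $6$-cycle $u\,\text{--}\,w_2^{(1)}\,\text{--}\,w_1^{(1)}\,\text{--}\,v\,\text{--}\,w_1^{(2)}\,\text{--}\,w_2^{(2)}\,\text{--}\,u$ together with the edge $(u,v)$, which joins two vertices at cyclic distance three and is thus a chord: this is precisely the domino. Since only the presence of these seven edges has to be certified, any further adjacencies among the six vertices merely contribute additional chords and cause no harm.

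The step I expect to be the main obstacle is guaranteeing that these six vertices are genuinely \emph{distinct}, i.e.\ ruling out the degeneracies in which a single node witnesses two of the required adjacencies. The purely combinatorial constraints do permit a collapse: a single node adjacent to both $u$ and $v$ along one branch would shorten that branch's path to a triangle and leave only four vertices. Excluding this is where the \emph{geometry} of confluent diagrams must be exploited: such a node would have to attach to $B_i$ at both a merge and a split, creating a closed smooth curve through it whose arcs carry no node-pair adjacency and would hence be inessential, contradicting reducedness; a cross-branch coincidence must be dispatched by combining the internal disjointness of $B_1,B_2$ with the planarity of the diagram. Carrying out this case analysis carefully, so that six distinct vertices always survive, is the delicate part of the proof.
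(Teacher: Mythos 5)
Your theta-shaped decomposition covers only one of the two cases that must be handled, and the case it misses is genuine and unavoidable. You assume the two distinct $uv$-paths split into a common prefix, two \emph{internally disjoint} branches $B_1,B_2$, and a common suffix. But in a confluent diagram a smooth path may revisit junctions: the second $uv$-path can differ from the first by winding around a closed curve of arcs. Concretely, in the reduced outerconfluent diagram of $K_{3,3}$ with alternating vertex order (Fig.~\ref{fig:outer_circ_counter}, see also Fig.~\ref{fig:pointless_circle}), the multiple paths between a pair of nodes arise precisely from extra turns around the central circle. In that situation the first divergence read from $u$ is a split-junction $s$ that lies \emph{after} the first divergence read from $v$ (a merge-junction $m$) along the direct path; the common prefix and common suffix overlap, the portion of $p$ ``between $s$ and $m$'' is empty, and the corresponding portion of $q$ is a closed loop. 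Your construction of a path $u$--$w_2^{(i)}$--$w_1^{(i)}$--$v$ inside each of \emph{two} branches cannot even be set up. The paper's proof of Lemma~\ref{lem:c6} devotes its entire second case to exactly this configuration (divergence at a merge-junction, convergence at a split-junction, one sub-path containing a cycle): there, reducedness is used to force at least \emph{two} merge-split pairs onto the loop, and the six resulting vertices induce a $K_{3,3}$ (a $C_6$ with three chords) rather than a domino. Without this case your proof is incomplete, and the case cannot be argued away, since it is exactly the configuration realized by reduced non-strict drawings.

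Two smaller points in the case you do treat. First, with your definition $m$ is the \emph{last} reconvergence (first divergence read from $v$), so $B_1$ and $B_2$ need not be internally disjoint when the paths form several consecutive bubbles; you must pass to a \emph{minimal} pair of distinct sub-paths, as the paper does. Second, your claim that reducedness forces \emph{all} merges before \emph{all} splits on a branch is false: the junction sequence merge--split--merge--split is compatible with every arc being essential. What reducedness actually yields---by your own ``through-traffic'' argument applied to the arc between the last split and the first merge, whose traffic the other branch duplicates---is that the sequence is not ``all splits, then all merges'', i.e.\ some merge is immediately followed by some split (a merge-split pair in the paper's terminology). This weaker statement is all you need: the merge-side node is adjacent to $v$, the split-side node is adjacent to $u$, and the two are adjacent to each other, giving your length-three path. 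The distinctness issue you flag at the end is real but is equally untreated in the paper's proof; the missing cycle case is the substantive gap.
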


	\begin{restatable}{observation}{obsdom}
		\label{obs:dom}
		Let $ G = (V,E) $ be a graph and $ V' \subseteq V $ a subset of six vertices such that $ G[V'] $ is isomorphic to a domino graph and let $X\cup Y = V'$ be the corresponding bipartition. Now let $\pi$ be a cyclic order of $ V' $ in which the vertices in $ X $ and in $Y$ are contiguous, respectively. Then there is no strict outerconfluent diagram $ D = (N,J,\Gamma) $ with order $\pi$ and  $ G_D = G[V'] $ or, consequently, $G_D = G$.
	\end{restatable}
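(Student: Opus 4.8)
The plan is to reduce to the six-vertex domino itself and then exploit its two degree-$3$ vertices. For the reduction, I would note that deleting nodes from a strict outerconfluent diagram (and cleaning up the arcs and junctions that become useless) leaves a strict outerconfluent diagram of the induced subgraph whose boundary order is the one induced by $\pi$; hence a diagram with $G_D=G$ would restrict to one with $G_D=G[V']$, and it suffices to rule out the latter. So assume $G[V']$ is the domino with bipartition $X\cup Y=V'$. I would label its two degree-$3$ vertices $c\in X$ and $d\in Y$ (they are adjacent, being the glued edge) and write the two $4$-cycles sharing $cd$ as $Q_1$ on $\{c,d,x_1,y_1\}$ and $Q_2$ on $\{c,d,x_2,y_2\}$, where $x_1,x_2\in X$ and $y_1,y_2\in Y$. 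The edges are exactly $cd$, $cy_1$, $cy_2$, $dx_1$, $dx_2$, $x_1y_1$, $x_2y_2$; in particular $x_1y_2$ and $x_2y_1$ are \emph{non}-edges, and these two non-edges will drive the contradiction.

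The key tool I would isolate first is a crossing claim for outerconfluent diagrams: if two edges $pr$ and $qs$ have endpoints that interleave in the boundary order, then by the Jordan curve theorem their smooth paths must intersect inside the disk; since two smooth curves in a confluent diagram can cross only by sharing a sub-path, the two paths come together at a merge-junction and part at a split-junction. Tracing through this shared merge--split portion, every pair formed by one of the two endpoints reached from the merge side and one of the two reached from the split side is joined by a smooth path, so all four of these cross pairs are edges of $G_D$. Applied to the four vertices $\{c,d,x_i,y_i\}$ of $Q_i$, the two gadget sides must be $\{c,x_i\}$ and $\{d,y_i\}$: any other split would place two vertices of the same part on opposite sides and thereby force a within-part non-edge (such as $cx_i$) to be an edge. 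Thus the realized $K_{2,2}$ is exactly $Q_i$.

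Now I would apply this to $Q_1$ and $Q_2$. Since $\pi$ places all of $X$ and all of $Y$ contiguously, the order induced on the four vertices of $Q_i$ keeps its two $X$-vertices together and its two $Y$-vertices together, so two edges of $Q_i$ interleave; by the crossing claim $Q_i$ is realized through a single merge--split portion $\sigma_i$ along which the $cd$-path runs, with $c,x_i$ merging at the $c$-side end and $d,y_i$ splitting at the $d$-side end. Strictness makes the $cd$-path unique; call it $P$ and orient it from $c$ to $d$, writing $\prec$ for ``closer to $c$ along $P$''. Then $\sigma_1,\sigma_2$ both lie on $P$, so $x_1,x_2$ merge into $P$ toward $d$ at distinct junctions $m_1,m_2\in P$, and $y_1,y_2$ split off $P$ with stem toward $c$ at distinct junctions $s_1,s_2\in P$, with $m_i\prec s_i$. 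Reading off adjacencies along $P$ finishes the proof: $x_1\sim y_1$ gives $m_1\prec s_1$; the non-edge $x_2y_1$ forces $s_1\prec m_2$ (otherwise $x_2$, merging $c$-ward of $s_1$, would reach $y_1$ through $P$); $x_2\sim y_2$ gives $m_2\prec s_2$; and the non-edge $x_1y_2$ forces $s_2\prec m_1$. Chaining yields $m_1\prec s_1\prec m_2\prec s_2\prec m_1$, a contradiction. This rules out $G_D=G[V']$, and the restriction argument rules out $G_D=G$.

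The step I expect to be the main obstacle is making the crossing claim fully rigorous: one must argue that two interleaving smooth paths genuinely share a merge-to-split sub-path (rather than merely touching), and that at the terminal split both outgoing branches are smooth continuations of the incoming stem, so that all four connecting paths really exist. The accompanying bookkeeping for the domino application—that each $x_i$ enters $P$ as a merge heading toward $d$ while each $y_i$ leaves $P$ as a split from the $c$-side, so that the four attachment junctions are distinct and linearly ordered along $P$—also relies on the smoothness rule at binary junctions, but once the crossing claim is in place these verifications are routine.
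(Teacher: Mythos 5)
Your proof is correct, and it is organized differently from the paper's. The paper's own proof (together with Lemma~\ref{lem:representable}) argues via the circular straight-line layout: for the particular bipartite order shown in its figure, both crossings are representable and must be replaced by merge-split pairs, which creates two distinct paths between the degree-three vertices and violates strictness; all remaining bipartite orders are then dispatched by the claim that they contain a non-representable crossing (so not even a non-strict outerconfluent drawing exists there). Your argument uses the same foundation---interleaving edges force a shared merge-to-split portion realizing a $K_{2,2}$, i.e.\ representability---but replaces the case analysis over orders by a single uniform argument: in \emph{every} bipartite-contiguous order each $4$-cycle $Q_i$ has an interleaving pair, the domino's within-part non-edges pin down the gadget sides to $\{c,x_i\}$ and $\{d,y_i\}$, strictness forces both gadgets onto the unique $cd$-path $P$, and the two cross non-edges $x_1y_2,\,x_2y_1$ yield the impossible cyclic chain $m_1\prec s_1\prec m_2\prec s_2\prec m_1$. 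What your route buys is uniformity and rigor: it works directly with an arbitrary strict outerconfluent diagram rather than with diagrams obtained by ``replacing crossings'' in a straight-line drawing, and it avoids the paper's unproven ``one can observe'' claim about the other orders; what the paper's route buys is brevity and, for those other orders, the slightly stronger conclusion that no outerconfluent drawing exists regardless of strictness. Two loose ends in your write-up deserve attention, though neither is fatal and both are at the level of informality the paper itself tolerates in Lemma~\ref{lem:representable}: first, the crossing claim needs the reduction to binary junctions, where two smooth paths through a common junction must share the merged arc (so ``touching'' really does mean sharing a sub-path), and the shared portion may be traversed in opposite directions by the two paths, which yields the \emph{other} pairing of the four endpoints---your non-edge argument does rule this out, but it should be stated as a case rather than assumed away; second, the strict inequalities $s_1\prec m_2$ and $s_2\prec m_1$ require excluding the equality cases $m_2=s_1$ and $m_1=s_2$, which follows because a single binary junction cannot simultaneously have the merge configuration (the $x$-arc and the $c$-side arc of $P$ enclosing $0^\circ$) and the split configuration (the $d$-side arc of $P$ and the $y$-arc enclosing $0^\circ$). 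Your node-deletion reduction from $G_D=G$ to $G_D=G[V']$ is sound and is in fact more explicit than the paper's ``or, consequently.''
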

	
	\begin{restatable}{theorem}{thmbipartite}
		The (bipartite-permutation $ \cap $ domino-free)-graphs are exactly the strict bipartite-outerconfluent graphs.
	\end{restatable}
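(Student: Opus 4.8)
The plan is to establish the two inclusions separately, using Theorem~\ref{thm:bp} as the bridge between the permutation world and the (not necessarily strict) outerconfluent world, and then controlling strictness through Observation~\ref{obs:dom} and Lemma~\ref{lem:c6}.

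For the inclusion from left to right, suppose $G$ is strict bipartite-outerconfluent. A strict bipartite-outerconfluent drawing is in particular a bipartite-outerconfluent drawing, so Theorem~\ref{thm:bp} immediately gives that $G$ is a bipartite-permutation graph. It remains to prove that $G$ is domino-free. Assume for contradiction that $G$ contains an induced domino on a vertex set $V'$. Since the domino is connected and $G$ is bipartite, the bipartition of $G[V']$ is inherited from the global bipartition $(X,Y)$; because $X$ and $Y$ appear contiguously on the boundary of the disk, the two sides of the domino also appear contiguously in the cyclic order induced on $V'$. Observation~\ref{obs:dom} then forbids any strict outerconfluent diagram realizing $G[V']$, and hence $G$, with this order, contradicting the assumption that $G$ is strict bipartite-outerconfluent.

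For the reverse inclusion I would argue by contraposition: let $G$ be a bipartite-permutation graph that is \emph{not} strict bipartite-outerconfluent, and exhibit an induced domino in $G$. By Theorem~\ref{thm:bp}, $G$ admits a bipartite-outerconfluent drawing, which I make reduced by deleting inessential arcs; this neither changes $G_D$ nor the contiguity of the two sides. Call the resulting diagram $D$. Were $D$ strict it would be a strict bipartite-outerconfluent drawing of $G$, contrary to hypothesis, so $D$ contains two distinct $uv$-paths for some pair of nodes. Lemma~\ref{lem:c6} then yields six vertices $V'$ for which $G[V']$ is isomorphic to $C_6$ with at least one chord; as $G$ is bipartite, the only possible chords are the three long diagonals of the hexagon. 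The target is to show that exactly one such diagonal is present, so that $G[V']$ is precisely a domino, completing the contrapositive.

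The main obstacle is this final refinement. Lemma~\ref{lem:c6} guarantees only \emph{at least} one chord, and a bipartite $C_6$ with two or three diagonals contains no induced domino at all -- indeed the three-diagonal case is $K_{3,3}$, which is itself a domino-free bipartite-permutation graph and even admits a strict bipartite-outerconfluent drawing (merge all of one side into a single trunk and split that trunk to the other side). Hence one cannot simply read a domino off an arbitrary output of the lemma. To close the gap I would return to the reduced diagram and exploit the outer cyclic order: tracing the two $uv$-paths together with the arcs that essentiality forces them to share, I expect the six vertices to split three-and-three across the contiguous blocks $X$ and $Y$ in such a way that a second or third diagonal would require a $uv$-path crossing incompatible with a reduced \emph{outer}confluent diagram on contiguously ordered sides, leaving exactly one diagonal. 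Equivalently, one may localize the pair of distinct paths to an induced $2\times 3$ grid pattern in the staircase (strong-ordering) representation of the bipartite-permutation graph. Ruling out these multi-diagonal configurations is the crux; once an induced domino is produced it witnesses that $G$ is not domino-free, which together with the left-to-right inclusion yields the claimed equality of the two classes.
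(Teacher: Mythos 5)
Your first inclusion (strict bipartite-outerconfluent $\Rightarrow$ bipartite-permutation $\cap$ domino-free) is exactly the paper's argument: Theorem~\ref{thm:bp} gives the permutation property, and Observation~\ref{obs:dom} gives domino-freeness. That half is fine.

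The reverse inclusion, however, has a genuine gap, and the step you yourself flag as ``the crux'' cannot be closed in the way you propose. You want to show that in a \emph{reduced} non-strict bipartite-outerconfluent diagram, the $C_6$ produced by Lemma~\ref{lem:c6} has exactly one chord, so that an induced domino appears and your contrapositive goes through. This claim is false: reducedness (every arc is essential, i.e.\ no arc can be \emph{deleted}) does not exclude doubled paths whose removal requires \emph{rerouting} rather than deletion. The paper's own proof of Lemma~\ref{lem:c6} makes this explicit---its second case constructs reduced non-strict configurations (a circular doubled path carrying at least two merge-split pairs) whose six witness nodes induce a $K_{3,3}$. Since $K_{3,3}$ is itself domino-free, bipartite-permutation, and strict bipartite-outerconfluent, a domino-free graph can perfectly well admit a reduced non-strict bipartite-outerconfluent diagram, and your contrapositive would then be asked to produce a domino where none exists. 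This is precisely why the paper does not argue by contraposition. Instead, it takes the reduced (possibly non-strict) diagram supplied by Theorem~\ref{thm:bp} and \emph{repairs} it: domino-freeness forces every violation of strictness to come from a $C_6$ with two or three chords, i.e.\ $K_{3,3}$ or $K_{3,3}$ minus an edge, and the paper then shows these can always be redrawn strictly. That redrawing is the real content of the direction you are missing: because Hui et al.'s construction respects the strong ordering, in the $K_{3,3}$-minus-an-edge case the missing edge must be $u_1v_3$ or $u_3v_1$ and the doubled path may be assumed to run between $u_2$ and $v_2$; the merging arcs can then be rerouted (Fig.~\ref{fig:redrawbip}) without creating any wrong adjacencies, eliminating the non-strictness. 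Your proposal contains no counterpart to this constructive repair step, and without it (or a direct construction of a strict drawing from the staircase structure, which you gesture at but do not carry out) the second inclusion remains unproven.
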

	
	\begin{proof}[Sketch]
		Let $ G = (V,E) $ be a (bipartite-permutation $ \cap $ domino-free) graph. By Theorem~\ref{thm:bp} we can find a bipartite-outerconfluent diagram $ D = (N,J,\Gamma) $ which has $ G_D = G $. Now assume that $ D $ is reduced but not strict. In this case we find six nodes $ N' \subseteq N $ corresponding to a vertex set $ V' \subseteq V_D $ in $ G_D$  such that $ G_D[V'] = (V', E') $ is a $ C_6 $ with at least one chord by Lemma~\ref{lem:c6}. In addition, since $ D $ (and hence also $G_D$) is bipartite and domino-free, we know there are two or three chords. Then $ G_D[V'] $ is  a $ K_{3,3} $ minus one edge $ e \in E' $ or $ K_{3,3} $. In a bipartite diagram these can always be drawn in a strict way.

		For the other direction, consider a strict bipartite-outerconfluent diagram $ D=(N,J,\Gamma) $. By Theorem~\ref{thm:bp}, $ G_D  $ is a bipartite permutation graph, and by Observation~\ref{obs:dom}, it must be domino-free. Thus, $ G_D  $ must be as described.
	\end{proof}

	\section{Strict Outerconfluent Graphs Have Cop Number Two}\label{sec:cops}
	The \emph{cops-and-robbers} game~\cite{af-gcr-84} on  a graph $ G = (V,E) $ is a two-player game with perfect information. The \emph{cop-player} controls $ k $ \emph{cop tokens}, while the \emph{robber-player} has one \emph{robber token}. In the first move the cop-player places the cop tokens on vertices of the graph, and then the robber places his token on another vertex. In the following the players alternate, in each turn moving their tokens to a neighboring vertex or keeping them at the current location. The cop-player is allowed to move all cops at once and multiple cops may be at the same vertex. The goal of the cop-player is to catch the robber, i.e., place one of its tokens on the same vertex as the robber. 
	
	The \emph{cop number} $\textit{cop}(G)$ of a graph $ G $ is the smallest integer $ k $ such that the cop-player has a winning strategy using $ k $ cop tokens. Gaven\u{c}iak et al. \cite{DBLP:conf/isaac/GavenciakJKK13} showed that the cop number of outer-string graphs is between three and four, while the cop-number of many other interesting classes of intersection graphs, such as circle graphs and interval filament graphs, is two. We show that the cop number of SOC graphs is two as well.
	
	Consider a SOC drawing $ D = (N,J,\Gamma) $ of a graph $G=(V,E)$, which we can assume to be connected. For nodes $ u,v \in N $, let the node interval $ N[u,v] \subset N $ be the set of nodes in clockwise order between $u$ and $v$ on the outer face, excluding $ u $ and $ v $. Let the cops be located on nodes $C\subseteq N$ and the robber be located on $r\in N$. We say that the robber is \emph{locked} to a set of nodes $N' \subset N $ if $r\in N'$ and every path from $r$ to $N\setminus N'$ contains at least one node that is either in $C$ or adjacent to a node in $C$; in other words, a robber is locked to $N'$ if it can be prevented from leaving $N'$ by a cop player who simply remains stationary unless the robber can be caught in a single move. The following lemma establishes that a single cop can lock the robber to one of two ``sides'' of a SOC drawing.
	
	\begin{restatable}{lemma}{lemblock}
		\label{lem:block}
		Let $ D = (N,J,\Gamma) $ be a SOC diagram of a graph $G$. Let a cop be placed on node $u$, the robber on node $r\neq u$ and not adjacent to $u$, and let $v\neq r$ be an arbitrary neighbor of $u$. Then the robber is either locked to $ N[u,v] $ or locked to $ N[v,u] $.
	\end{restatable}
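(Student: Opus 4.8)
The plan is to use the unique $uv$-path $p_{uv}$ of the strict diagram as a topological separator of the disk and to argue that the robber can cross it only through the closed neighborhood of $u$. Since $v$ is a neighbor of $u$, the edge $(u,v)$ is present and, by strictness, there is a unique \emph{simple} $uv$-path $p_{uv}$, drawn as a simple curve from the boundary point $u$ to the boundary point $v$ with its interior inside the disk. By the Jordan arc theorem this curve splits the disk into two regions whose boundaries contain, respectively, the boundary arc carrying $N[u,v]$ and the arc carrying $N[v,u]$; as $r$ is distinct from $u$ and from $v$ (the latter because $r$ is non-adjacent to $u$), it lies in exactly one of $N[u,v]$, $N[v,u]$, say the former. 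I will show the robber is then locked to $N[u,v]$.

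First I would reduce the locking statement to a single-edge crossing claim. Take any graph path $r=x_0,\dots,x_m$ with $x_m\notin N[u,v]$ and suppose, for contradiction, that no $x_i$ lies in the closed neighborhood $N_G[u]=\{u\}\cup\{\,w : (u,w)\in E_D\,\}$. Then no $x_i$ equals $u$ or $v$ (recall $v\in N_G[u]$), so every $x_i$ lies in $N[u,v]\cup N[v,u]$; since $x_0\in N[u,v]$ and $x_m\in N[v,u]$, there is a first index $t$ with $x_{t-1}\in N[u,v]$ and $x_t\in N[v,u]$. The endpoints of this edge lie in the two different regions, so the curve representing $(x_{t-1},x_t)$ must meet $p_{uv}$; because arcs of a confluent diagram intersect only at shared endpoints, the two paths must share a junction. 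The crux (below) is that this forces $x_{t-1}$ or $x_t$ to be adjacent to $u$, contradicting the choice of the path. Hence every such path meets $N_G[u]$, i.e.\ the robber is locked to $N[u,v]$; the symmetric case $r\in N[v,u]$ gives locking to $N[v,u]$.

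The heart of the proof, and the step I expect to be the main obstacle, is this crossing claim: if an edge $(a,b)$ with $a\in N[u,v]$ and $b\in N[v,u]$ has its path $p_{ab}$ meeting $p_{uv}$, then $u$ is adjacent to $a$ or $b$. At a shared binary junction the only smooth transits are between the central (split) arc and one of the two tangent (merge) arcs, so both $p_{uv}$ and $p_{ab}$ use the central arc and therefore share at least one arc. I would consider the maximal common sub-path $S$ of $p_{uv}$ and $p_{ab}$ through that junction. Since neither path may pass through an interior node, both endpoints $E_1,E_2$ of $S$ are junctions at which the two paths diverge; arriving along $S$ and leaving by two distinct arcs is smooth only if $S$ enters via the central arc and the two paths leave by the two merge arcs, so $E_1$ and $E_2$ are split-junctions with $S$ as their central arc. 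Orient $p_{uv}$ from $u$ to $v$ and let $E_1$ be the endpoint of $S$ on the $u$-side. Now reroute: follow $p_{uv}$ from $u$ into $E_1$ (arriving by a merge arc), continue smoothly along the central arc through all of $S$ to $E_2$, and there turn onto the merge arc used by $p_{ab}$ toward whichever of $a,b$ lies beyond $E_2$. Every transit is of the permitted merge-to-central or central-to-merge type, and the resulting curve is a smooth node-avoiding path from $u$ to that vertex; by strictness it witnesses an edge, so one of $a,b$ is adjacent to $u$, as required. Verifying that the concatenation is genuinely smooth and node-avoiding in each orientation of $p_{ab}$ along $S$ is the delicate point, but it follows directly from the merge/split continuation rules for binary junctions.
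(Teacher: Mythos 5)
Your proposal is correct and follows essentially the same route as the paper's proof: both consider the first edge of a hypothetical escaping path that crosses the $u$--$v$ separator, observe that its smooth path in $D$ must share a junction (and hence the central arc, by the binary-junction smoothness rule) with the $uv$-path, and then reroute along the shared segment to obtain a smooth path witnessing an edge incident to $u$, contradicting that the escaping path avoids the closed neighborhood of $u$. Your only deviation is in the execution of the rerouting step --- by rerouting from $u$'s end of the maximal shared subpath you conclude directly that $u$ is adjacent to one of the crossing edge's endpoints, whereas the paper first derives that $x$ is adjacent to $u$ or $v$ and disposes of the $v$-case separately; this is a slightly cleaner (and more carefully justified) version of the same argument.
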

	
	Let $ u,v \in N $ be two nodes of a SOC diagram $ D = (N,J,\Gamma) $. 
	We call a neighbor $ w $ of $u$ in $N[u,v] $ \emph{cw-extremal}  (resp.~\emph{ccw-extremal})
	for $ u,v $ (assuming such a neighbor exists), if it is the last neighbor of $ u $ in the clockwise (resp.~counterclockwise) 
	traversal of $ N[u,v] $.
	Now let $ u,v $ be two neighboring nodes in $N$, $ w \in N[u,v] $ be the cw-extremal node for $ u $ and $ x \in N[u,v] $ be the ccw-extremal node for $ v $. 
	If $ w $ appears before $ x $ in the clockwise traversal of $ N[u,v] $ we call $ w,x $ the \emph{extremal pair} of the $uv$-path, see Fig.~\ref{fig:cops_and_robbers}(b) and (c).
	In the case where only one node of $u,v$ has an extremal neighbor $w$, say $u$, we define the extremal pair as $v,w$.
	In the following we assume that for a given $uv$-path the extremal pair exists.
	
	\begin{figure}[tbp]
		\centering
		\includegraphics{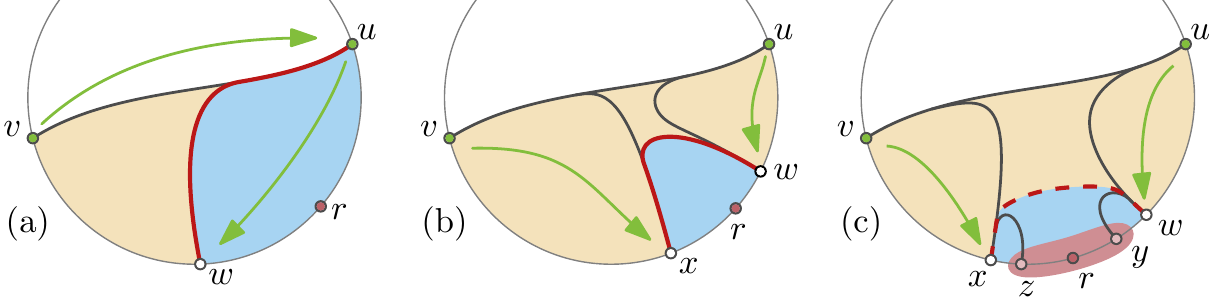}
		\caption{Moves of the cops to confine the robber to a strictly smaller range.}
		\label{fig:cops_and_robbers}
	\end{figure}

	\begin{restatable}{lemma}{lembicomponent}
		\label{lem:bi_component}
		Let $ D = (N,J,\Gamma) $ be a SOC diagram of a graph $ G $, $ u,v \in N $ be two nodes connected by a $ uv $-path in $ P(D) $ and $ w,x \in N[u,v] $ the extremal pair of the $ uv $-path. If the cops are placed at $ u $ and $ v $ and the robber is at $ r \in N[u,v] $, $r\neq w $, $ r \neq x $, there is a move that locks the robber to $ N[u,w] $, $ N[w,x] $ or $ N[x,v] $.
	\end{restatable}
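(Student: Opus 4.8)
The plan is to reduce everything to three applications of Lemma~\ref{lem:block} with the cops kept stationary at $u$ and $v$, and then to intersect the resulting locked regions. First I would record the clockwise layout of the boundary: starting at $u$ we encounter, in order, the arc $N[u,w]$, then $w$, then $N[w,x]$, then $x$, then $N[x,v]$, then $v$, and finally $N[v,u]$ before returning to $u$. Since $w$ is a neighbor of $u$, $x$ is a neighbor of $v$, and $u,v$ are adjacent (they are joined by a $uv$-path), the three edges $uv$, $uw$ and $vx$ are all available as barriers. If $r$ is adjacent to $u$ or to $v$ the corresponding cop catches the robber in its next move, so I may assume that $r$ is adjacent to neither cop and hence that Lemma~\ref{lem:block} applies to each of these edges.

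Next I would invoke Lemma~\ref{lem:block} three times, in each case keeping the side of the barrier that contains $r$: (i) the cop at $u$ with neighbor $v$ locks the robber to $N[u,v]$, the side containing $r$ since $r\in N[u,v]$; (ii) the cop at $u$ with neighbor $w$ locks the robber to $N[u,w]$ or to $N[w,u]$; and (iii) the cop at $v$ with neighbor $x$ locks the robber to $N[x,v]$ or to $N[v,x]$. Because the two cops never move, all of these locks hold simultaneously, and a robber that is locked to $N_1$ and to $N_2$ is locked to $N_1\cap N_2$ (every escape path out of $N_1\cap N_2$ leaves $N_1$ or leaves $N_2$, hence meets a cop or a cop's neighbor).

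I would then split into three cases according to which of $N[u,w]$, $N[w,x]$, $N[x,v]$ contains $r$. If $r\in N[u,w]$, application (ii) already yields the lock to $N[u,w]$; symmetrically, if $r\in N[x,v]$ then (iii) yields the lock to $N[x,v]$. The interesting case is $r\in N[w,x]$: here (ii) gives the side $N[w,u]$ and (iii) gives the side $N[v,x]$, and combining these with (i) I intersect $N[u,v]\cap N[w,u]\cap N[v,x]$. Reading these three cyclic intervals off the layout above and discarding the endpoints excluded from each, this intersection collapses exactly to $N[w,x]$, so the robber is locked to $N[w,x]$. In every case the required move is simply for both cops to stay in place.

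The step I expect to be the main obstacle is precisely this last case. With only the two lateral barriers $uw$ and $vx$ one confines the robber to $N[w,x]\cup N[v,u]$, because those two non-crossing chords leave the slice $N[w,x]$ and the opposite arc $N[v,u]$ in a common region of the disk. The key realization is that the third barrier, the $uv$-edge jointly controlled by the two stationary cops, is exactly what separates $N[w,x]$ from $N[v,u]$; recognizing that all three edges must be used, and verifying the cyclic-interval bookkeeping $N[u,v]\cap N[w,u]\cap N[v,x]=N[w,x]$, is the crux. Degenerate situations (an empty arc, $w$ or $x$ coinciding with an endpoint of an arc, or only one of $u,v$ admitting an extremal neighbor) only delete cases and are handled by the same three applications.
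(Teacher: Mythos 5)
Your proof is correct, but it takes a genuinely different route from the paper's. In the paper, the ``move'' of the lemma is an actual repositioning of the cops: to $u,w$ if $r\in N[u,w]$, to $v,x$ if $r\in N[x,v]$, and to $w,x$ if $r\in N[w,x]$ (using that $w$ is adjacent to $u$, $x$ to $v$, and $u$ to $v$), and the lock is then established relative to the \emph{new} cop positions---including the delicate sub-case where a robber adjacent to $w$ or $x$ slips into $N[u,w]$ or $N[x,v]$ as the cops step forward, whereupon it locks itself there by Lemma~\ref{lem:block}. You instead keep both cops stationary and derive the lock by intersecting three applications of Lemma~\ref{lem:block} (barriers $uv$, $uw$, $vx$), using that locks are monotone in the cop set and closed under intersection; your cyclic-interval computation $N[u,v]\cap N[w,u]\cap N[v,x]=N[w,x]$ is right, and your reduction to the case that $r$ is adjacent to neither cop is harmless (such a robber is trivially locked under the paper's definition). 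What your version buys is economy: no topological reasoning is needed beyond the black-box use of Lemma~\ref{lem:block}. What it loses is exactly what the paper's version is designed to deliver downstream: after the paper's move the cops stand at the endpoints of the shrunken interval, which is the precondition of Lemma~\ref{lem:bi_catch} (cops placed on $w$ and $x$) and of the recursion in Theorem~\ref{thm:cops_and_robbers}. With your stationary-cop lemma, the theorem's proof would still have to advance the cops to $w,x$ (or $u,w$, or $x,v$) and argue that this advance does not release the robber; that argument---escape sub-case included---is essentially the paper's proof of the present lemma. So your lemma is true as stated, but adopting it would shift, rather than eliminate, the real work.
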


	\begin{restatable}{lemma}{lembicatch}
		\label{lem:bi_catch}
		Let $ D = (N,J,\Gamma) $ be a SOC diagram of a graph $ G $, $ u,v \in N $ be two nodes connected by a $ uv $-path in $ P(D) $ and $ w,x \in N[u,v] $ be the extremal pair of the $ uv $-path such that there is no $wx$-path in $P(D)$. If the robber is at $ r \in N[w,x] $ and the cops are placed on $ w,x $ we can find $ y,z \in N[w,x] \cup\{w,x\} $ such that the $ yz $-path exists in $ P(D) $ and the robber is locked to $ N[y,z] $. %
	\end{restatable}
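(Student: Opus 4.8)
The plan is to reduce the lemma to a single \emph{barrier principle} together with a combinatorial existence statement, and then to identify the correct separating edge inside the interval. I would first record the barrier principle, which also underlies Lemmas~\ref{lem:block} and~\ref{lem:bi_component}. Whenever two nodes $a,b$ are joined by an $ab$-path $p\in P(D)$, the curve $p$ is simple and, together with the two boundary arcs of the disk, bounds a region whose boundary nodes are exactly $N[a,b]$. Since in a strict confluent diagram two arcs meet only at a shared junction (and there share a tangent line), any edge $(s,t)\in E_D$ with $s\in N[a,b]$ and $t\notin N[a,b]\cup\{a,b\}$ is represented by a smooth path that must meet $p$ at a common junction; tracing the merged suffix of the two paths toward $a$ or $b$, exactly as in the analysis behind Lemma~\ref{obs:indep_subtrees} and Theorem~\ref{thm:sc}, forces $s$ or $t$ to be adjacent to $a$ or to $b$. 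Hence, for any adjacent pair $y,z$ with $r\in N[y,z]$, two cops occupying $y$ and $z$ lock the robber to $N[y,z]$: the first edge along any escape walk that leaves $N[y,z]$ crosses the $yz$-path and therefore lands on a node in or adjacent to $\{y,z\}$.

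By the barrier principle it suffices to exhibit nodes $y,z\in N[w,x]\cup\{w,x\}$ that are joined by a $yz$-path and satisfy $r\in N[y,z]$; the cops then reach $y$ and $z$ within the global strategy orchestrated in the theorem. Because no $wx$-path exists, any such pair has $\{y,z\}\neq\{w,x\}$, so $N[y,z]\subsetneq N[w,x]$ and the step makes strict progress. To find the pair I would use that, in the configuration in which this lemma is invoked (reached through Lemma~\ref{lem:bi_component}), the robber is already confined to $N[w,x]$; I would then consider the edges $(a,b)\in E_D$ whose $ab$-path separates $r$ from both cops, i.e.\ with $r$ strictly inside the arc $N[a,b]$ and $w,x$ outside it, and take one whose arc $N[a,b]$ is inclusion-minimal, setting $\{y,z\}=\{a,b\}$. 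The extremal definitions of $w$ and $x$ (no neighbor of $u$ lies clockwise of $w$, and no neighbor of $v$ counterclockwise of $x$, inside $N[u,v]$), fed through the barrier principle applied to the $uw$- and $vx$-paths, are what keep the endpoints of this innermost separator inside $N[w,x]\cup\{w,x\}$.

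The main obstacle is precisely this existence-and-containment step: I must show that confinement of the robber actually produces a single $yz$-path with $r$ strictly inside $N[y,z]$ (rather than two unrelated barriers on the two sides) and that its endpoints do not slip outside the interval. This is the reverse direction of the barrier principle, and it is exactly here that strictness of $D$ is indispensable — in a non-strict diagram an escape path could wind around a junction and reach the exterior without ever creating an adjacency to $y$ or $z$, so the separation argument would collapse. The remaining verifications (that the inclusion-minimal separator is well defined and that $r\neq y,z$, so the robber is not caught outright) are routine once this core claim is in place.
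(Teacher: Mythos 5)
Your ``barrier principle'' is sound, but it is not new content: it is essentially Lemma~\ref{lem:block}, which the paper has already proved by the same crossing-at-a-junction argument you sketch. The genuine gap lies in the reduction you build on top of it. You recast the lemma as a \emph{static} existence claim: among edges $(a,b)$ whose interval $N[a,b]$ strictly contains the robber's current node $r$ while excluding $w,x$, pick an inclusion-minimal one and set $\{y,z\}=\{a,b\}$. Such an edge need not exist. Take $N[w,x]=\{n_1,n_2,n_3\}$ in clockwise order with the only edges incident to these nodes being $wn_1$, $n_1n_2$, $n_2n_3$, $n_3x$, and put the robber at $r=n_2$: every edge with both endpoints in $N[w,x]\cup\{w,x\}$ has an \emph{empty} interval, so your inclusion-minimal separator is undefined, yet the lemma's conclusion is still realized in this configuration---because the cop strategy ends by capturing the robber outright. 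This is the second, related flaw: the lemma is implicitly about a multi-move strategy during which the robber also moves, and its ``conclusion'' may be reached through capture or through the robber locking \emph{itself} into a smaller interval by its own escape move. Writing that ``the cops then reach $y$ and $z$ within the global strategy'' ignores that confinement must be maintained while the cops are in transit and that the final pair $y,z$ depends on the robber's responses; it cannot be read off from the initial position. You flag exactly this ``existence-and-containment step'' as the main obstacle and declare the rest routine, so the core of the lemma is left unproved---and, as the example shows, unprovable in the static form you chose.

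The paper's proof is instead dynamic and works by iterated cop moves. It splits on whether $G$ has a $w$--$x$ path running through $N[w,x]$. If it does, one cop steps to the ccw-extremal neighbor $y$ of $x$ (note $yx$ is then an edge): either $r\in N[y,x]$ and Lemma~\ref{lem:block} finishes, or $r\in N[w,y]$, in which case either the robber is not adjacent to $y$ and is now locked to the strictly smaller interval $N[w,y]$ (repeat), or it is adjacent to $y$ and on its move is either caught or escapes into $N[y,x]$, where it locks itself. If no such $w$--$x$ path exists, a single cop at $x$ already locks the robber to $N[w,x]$, so the second cop can safely walk, extremal neighbor by extremal neighbor, until the two cops pin the robber to an interval $N[y,x]$; if $yx$ is not an edge, this configuration falls back into the first case. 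Any repair of your proposal would have to reintroduce precisely this interactive, case-split process; the static separator picture cannot substitute for it.
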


	Combining Lemmas~\ref{lem:block}, \ref{lem:bi_component} and~\ref{lem:bi_catch} yields the result.
	
	\begin{restatable}{theorem}{thmcopsandrobbers}
		\label{thm:cops_and_robbers}
		SOC graphs have cop number two. 
	\end{restatable}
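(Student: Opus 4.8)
The plan is to combine the three preceding lemmas into a monotone \emph{interval-shrinking} strategy for two cops. Throughout the game we maintain an invariant configuration in which both cops sit on two nodes $u,v$ that are joined by a $uv$-path and the robber is \emph{locked} to the node interval $N[u,v]$; we then argue that every round of the strategy strictly decreases $|N[u,v]|$, so that after finitely many rounds the robber can no longer avoid capture. Since $G$ may be taken connected, this yields $\textit{cop}(G)\le 2$; the matching lower bound is immediate because SOC graphs include graphs that are not cop-win (e.g.\ $C_4$), so the value two is tight.

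First I would establish the invariant. Place the first cop on an arbitrary node $u$ and let the robber respond at $r$. If $r=u$ or $r$ is adjacent to $u$, the robber is caught on the cops' next move, so we may assume otherwise, which is exactly the hypothesis of Lemma~\ref{lem:block}. Choosing any neighbour $v$ of $u$ and relabelling the two sides if necessary, that lemma locks the robber to $N[u,v]$. Because $v$ is a neighbour of $u$ there is a $uv$-path, so placing the second cop on $v$ puts the cops precisely on the path-connected endpoints of an interval containing the robber, establishing the invariant.

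Next comes the shrinking step. Given the invariant with the robber at $r\in N[u,v]$, consider the extremal pair $w,x\in N[u,v]$ of the $uv$-path; here $w$ is a neighbour of $u$ and $x$ a neighbour of $v$, so the $uw$- and $xv$-paths exist. If $r\in\{w,x\}$, the robber sits on a node adjacent to a cop inside a set it cannot leave, and the relevant cop captures it on its next move. Otherwise Lemma~\ref{lem:bi_component} supplies a single cop move after which the robber is locked to $N[u,w]$, $N[w,x]$, or $N[x,v]$. In the $N[u,w]$ and $N[x,v]$ cases the new interval already has cop-occupied, path-connected endpoints ($u,w$ and $x,v$, respectively) and is a proper subset of $N[u,v]$, so the invariant is re-established on a strictly smaller interval. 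In the $N[w,x]$ case, if a $wx$-path exists we recurse directly with the cops at $w,x$; if no $wx$-path exists, Lemma~\ref{lem:bi_catch} produces nodes $y,z\in N[w,x]\cup\{w,x\}$ joined by a $yz$-path with the robber locked to $N[y,z]\subseteq N[w,x]$, again re-establishing the invariant on a strictly smaller interval.

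Since $N$ is finite and $|N[u,v]|$ strictly decreases in every round, after finitely many rounds the interval becomes so small that the only admissible robber positions are the cop-adjacent extremal nodes, at which point the capture case above applies and two cops win. The main obstacle is the bookkeeping of the shrinking step: one must check that the cop move guaranteed by each lemma simultaneously preserves the locking property and leaves the two cops exactly on the path-connected endpoints required to reapply Lemma~\ref{lem:bi_component}, and that the degenerate configurations — the robber reaching an extremal node, or the extremal pair failing to exist for the current $uv$-path — are correctly absorbed into the base case rather than stalling the recursion.
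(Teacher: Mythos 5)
Your proposal follows exactly the paper's strategy: the same locked-interval invariant, the same three lemmas, and the same shrink-until-capture recursion. Most of your bookkeeping is sound --- the capture case $r\in\{w,x\}$, the re-establishment of cop-occupied, path-connected endpoints after Lemma~\ref{lem:bi_component}, the strict decrease and termination argument, and the $C_4$ tightness remark (which the paper does not even spell out) are all fine. The initialization quibble is cosmetic: since the rules place all cops before the robber, one simply puts both cops on the endpoints of an arbitrary $uv$-path first, as the paper does, and then invokes Lemma~\ref{lem:block}.

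The genuine gap is in your shrinking step. You invoke Lemma~\ref{lem:bi_component} as if the extremal pair always exists, but by definition $w,x$ form an extremal pair only when the cw-extremal neighbor $w$ of $u$ \emph{precedes} the ccw-extremal neighbor $x$ of $v$ in the clockwise traversal of $N[u,v]$. Two further configurations occur: (i) both extremal neighbors exist but in crossing order $u<x<w<v$, and (ii) only one of $u,v$ has a neighbor in $N[u,v]$ at all. You relegate both to ``degenerate configurations \dots\ absorbed into the base case,'' but that resolution fails for (i): in the crossing case the robber may be far from either cop, so no capture is imminent, and Lemma~\ref{lem:bi_component} simply does not apply since its hypothesis (the extremal pair) is not met. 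The paper handles (i) with Lemma~\ref{lem:block} alone: because $x$ precedes $w$, the intervals $N[u,w]$ and $N[x,v]$ together cover $N[u,v]$, so the robber is already locked to one of them, and the cops move to $u,w$ or to $x,v$ and recurse on a strictly smaller interval. In case (ii), which cannot be wished away because only connectivity guarantees \emph{some} neighbor $w$ on one side, the paper defines the pair as $v,w$ and then, depending on whether $r\in N[u,w]$ or $r\in N[w,v]$, moves the cops to $u,w$ (Lemma~\ref{lem:block}) or to $v,w$ (Lemma~\ref{lem:bi_catch}); again a shrinking move, not a capture. Without these two branches your recursion stalls precisely in the situations you yourself flag as ``the main obstacle,'' so the case analysis must be completed along these lines for the proof of Theorem~\ref{thm:cops_and_robbers} to go through.
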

	\begin{proof}[Sketch]
		Let $ D = (N,J,\Gamma) $ be a strict-outerconfluent diagram of a (connected) graph $G$. Choose any $uv$-path in $P(D) $ and place the cops on $ u $ and $ v $ as the initial move. The robber must be placed on a node $ r$ that is either in $N[u,v]$ or in $N[v,u]$; by symmetry, let us assume the former. By Lemma~\ref{lem:block}, the robber is now locked to $N[u,v] \neq \emptyset$.
		
		In every move we will shrink the locked interval until eventually the robber is caught. We distinguish three cases, based on the extremal neighbors $w$ and $x$ of $ u $ and $ v $ in $ N[u,v] $ and their ordering along the outer face. If $ w,x $ form no extremal pair, we can use Lemma~\ref{lem:block}, if they do form an extremal pair, we use first Lemma~\ref{lem:bi_component} and then, depending on the configuration, again Lemma~\ref{lem:block} (see Fig.~\ref{fig:cops_and_robbers}(b)) or go into the case of Lemma~\ref{lem:bi_catch} (see Fig.~\ref{fig:cops_and_robbers}(c)). 
	\end{proof}
	
	Theorem~\ref{thm:cops_and_robbers} suggests a closer link between SOC graphs and interval-filament graphs~\cite{gavril2000maximum},
	another subclass of outer-string graphs with cop number two.

	\section{Clique-width of Tree-like Strict Outerconfluent Graphs}\label{sec:cliquewidth}
	In 2005, Eppstein et al.~\cite{eppstein2005delta} showed that every strict confluent graph whose arcs in a strict confluent drawing topologically form a tree is distance hereditary and hence exhibits certain well-understood structural properties---in particular, every such graph has bounded \emph{clique-width}~\cite{CourcelleMakowskyRotics00}. 
	These graphs %
	are called \emph{$\Delta$-confluent} graphs. In their tree like confluent drawings an additional type of 3-way junction is allowed, the \emph{$\Delta$-junction}, which smoothly links together all three incident arcs. See Fig.~\ref{fig:delta_junctions}, where the junctions $ j' $ and $ k' $ now form a single $\Delta$-junction instead of three separate merge or split junctions.

	\begin{wrapfigure}[14]{r}{.4\textwidth}
		\centering
		\vspace{-1.25cm}
		\includegraphics[scale=1]{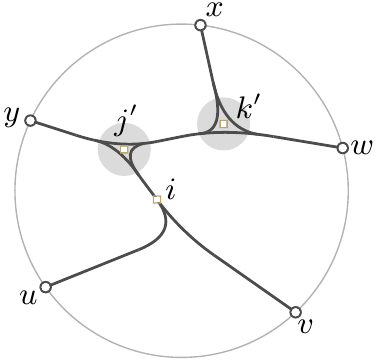}
		\caption{A $\Delta$-confluent diagram representing $K_5 - (u, v) $. Nodes are disks, junctions are squares. $ \Delta $-junctions are marked with a grey circle.}
		\label{fig:delta_junctions}
	\end{wrapfigure}
	
	In this section, we lift the result of Eppstein et  al.~\cite{eppstein2005delta} to the class of strict outerconfluent graphs: in particular, we show that as long as the arcs incident to junctions (including $\Delta$-junctions) topologically form a tree, strict outerconfluent graphs also have bounded clique-width. Equivalently, we show that ``extending'' any drawing covered by Eppstein et al.~\cite{eppstein2005delta} through the addition of outerplanar drawings of subgraphs in order to produce a strict outerconfluent drawing does not substantially increase the clique-width of the graph.
	Since the notion of clique-width will be central to this section, we formally introduce it below (see also the work of Courcelle et al.~\cite{CourcelleMakowskyRotics00}). %
	A $k$-graph is a graph whose vertices are labeled by $[k] = \{1, 2, \dots, k\}$; formally, the graph is equipped with a labeling function $\gamma\colon V(G) \to [k]$, and we also use $\gamma^{-1}(i)$ to denote the set of vertices labeled $i$ for $i \in [k]$. 
	We consider an arbitrary graph as a $k$-graph with all vertices labeled by $1$. 
	We call the
	$k$-graph consisting of exactly one vertex $v$ (say, labeled by $i$) an \emph{initial} $k$-graph and denote it by $i(v)$.
	The clique-width of a graph $G$ is the smallest integer $k$ such that $G$ can be constructed from \emph{initial} $k$-graphs by means of repeated application of the following three operations:
	\begin{enumerate}
		\item Disjoint union (denoted by $\oplus$);
		\item Relabeling: changing all labels $i$ to $j$ (denoted by $p_{i\rightarrow j}$);
		\item Edge insertion: adding an edge between every vertex labeled by $i$ and every vertex labeled by $j$, where $i\neq j$ (denoted by $\eta_{i,j}$ or $\eta_{j,i}$).
	\end{enumerate}
	The construction sequence of a $k$-graph $G$ using the above operations can be represented by an algebraic term composed of $i(v)$, $\oplus$, $p_{i\rightarrow j}$ and $\eta_{i,j}$ (where $v \in V(G)$, $i\neq j$ and $i,j\in [k]$). Such a term is called a \emph{$k$-expression} defining $G$, and the \emph{clique-width} of $G$ is the smallest integer $k$ such that $G$ can be defined by a $k$-expression. 
	Distance-hereditary graphs are known to have clique-width at most $3$~\cite{GolumbicR00} and outerplanar graphs have clique-width at most $5$ due to having treewidth at most~$2$~\cite{CourcelleO00,Baker94}. 

	Let \emph{(tree-like) $\Delta$-SOC graphs} be the class of all graphs which admit strict outerconfluent drawings (including $\Delta$-junctions) such that 
	the union of all arcs incident to at least one junction
	topologically forms a tree. 
	Clearly, the edge set $E$ of every tree-like $\Delta$-SOC graph $G=(V,E)$ with confluent diagram $D_G$ can be partitioned into sets $E_s$ and $E_c$, where $E_s$ (the set of \emph{simple edges}) contains all edges represented by single-arc paths in $D$ not passing through any junction and $E_c$ (the set of confluent edges) contains all remaining edges in $G$. Let $G_c=G[E_c]=(V_c,E_c)$ be the subgraph of $G$ induced by $E_c$, i.e., $V_c$ is obtained from $V$ by removing all vertices without incident edges in $E_c$. 
	
	We note that even though $G_c$ is known to be distance-hereditary~\cite{eppstein2005delta} and $G-E_c$ is easily seen to be outerplanar, this does not imply that tree-like $\Delta$-SOC graphs have bounded clique-width---indeed, the union of two graphs of bounded clique-width may have arbitrarily high clique-width (consider, e.g., the union of two sets of disjoint paths that create a square grid). 
	Furthermore, one cannot easily adapt the proof of Eppstein et al.~\cite{eppstein2005delta} to tree-like $\Delta$-SOC graphs, as that explicitly uses the structure of distance-hereditary graphs; note that there exist outerplanar graphs which are not distance-hereditary, and hence tree-like $\Delta$-SOC graphs are a strict superclass of distance hereditary graphs.
	Before proving the desired theorem, 
	we introduce an observation which will later allow us to construct parts of $G$ in a modular manner.
	
	\begin{restatable}{observation}{obscw}
		\label{obs:cw}
		Let $H=(V,E)$ be a graph of clique-width $k\geq 2$, let $V_1, V_2$ be two disjoint subsets of $V$, and let $s \in V \setminus (V_1 \cup V_2)$. Then there exists a $(3k+1)$-expression defining $H$ so that in the final labeling
		all vertices in $V_1$ receive label $1$, all vertices in $V_2$ receive label $2$, $s$ receives label $3$ and all remaining vertices receive label $4$.
	\end{restatable}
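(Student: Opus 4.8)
The plan is to take a $k$-expression $\phi$ witnessing that $H$ has clique-width $k$ and to simulate it with a refined label set that simultaneously carries each vertex's original ``working'' label (so that the edge-insertion operations $\eta_{i,j}$ of $\phi$ can still be reproduced) and its target class. The naive refinement would track the product $[k]\times\{V_1,V_2,\{s\},\text{rest}\}$ and thus use $4k$ labels; the key point that brings the bound down to $3k+1$ is that the ``rest'' class needs no marking at all, while the singleton $\{s\}$ can be tracked with a single label instead of a full copy of $[k]$.

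Concretely, I would work with the labels $\{\,\text{plain-}a : a\in[k]\,\}\cup\{\,(a,V_1):a\in[k]\,\}\cup\{\,(a,V_2):a\in[k]\,\}\cup\{\sigma\}$, a set of size $3k+1$. A vertex of $V_1$ currently carrying working label $a$ is stored as $(a,V_1)$, a vertex of $V_2$ as $(a,V_2)$, a ``rest'' vertex as $\text{plain-}a$, and $s$ always as $\sigma$. I then rewrite $\phi$ operation by operation: an initial graph $a(v)$ becomes an initial graph using the refined label of $v$; a relabeling $p_{i\rightarrow j}$ becomes the (at most three) relabelings acting on $\text{plain-}i$, $(i,V_1)$ and $(i,V_2)$; and disjoint unions $\oplus$ are copied verbatim. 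The only delicate point among the non-$s$ vertices is the edge insertion $\eta_{i,j}$: since every vertex of working label $i$ (other than possibly $s$) now sits on one of $\{\text{plain-}i,(i,V_1),(i,V_2)\}$ and likewise for $j$, I replace $\eta_{i,j}$ by the nine insertions between these two triples, which reproduces exactly the edges $\phi$ adds among non-$s$ vertices.

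The vertex $s$ is the crux. Since it never changes label in the refined expression, I track its working label in $\phi$ symbolically at ``compile time'': starting from the label of its initial graph and updating it at each relabeling $p_{i\rightarrow j}$ that would move it. Whenever $\phi$ performs $\eta_{i,j}$ with $s$'s current working label equal to $i$ (and symmetrically for $j$), I additionally insert $\eta_{\sigma,\text{plain-}j}$, $\eta_{\sigma,(j,V_1)}$ and $\eta_{\sigma,(j,V_2)}$, i.e.\ I connect $\sigma$ to exactly the vertices of working label $j$. Because $\sigma$ is carried by $s$ alone, these operations add precisely the edges incident to $s$ that $\phi$ creates and nothing more; in particular no edge within working label $i$ is introduced, matching the semantics of $\eta$. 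Once all of $\phi$ has been simulated, the constructed graph is isomorphic to $H$ and the labeling records each vertex's class. I finish with relabelings that collapse $\{(a,V_1)\}_{a\in[k]}$, $\{(a,V_2)\}_{a\in[k]}$ and $\{\text{plain-}a\}_{a\in[k]}$ each to a single representative label while leaving $s$ on $\sigma$; identifying these four representatives with $1,2,3,4$ (possible since $3k+1\ge 4$) yields exactly the required final labeling.

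The main obstacle I anticipate is the bookkeeping around $s$: one must argue carefully that simulating $s$ with a single static label $\sigma$ together with symbolic tracking of its working label reproduces \emph{all and only} the edges incident to $s$ — in particular that an $\eta_{i,j}$ with $s$ on side $i$ connects $s$ to the $j$-labelled vertices but not to the other $i$-labelled ones. The remaining verifications — that non-$s$ edges are reproduced by the nine-fold expansion of each $\eta_{i,j}$, that relabelings and unions translate transparently, and that the final collapse produces the labels $1,2,3,4$ — are routine.
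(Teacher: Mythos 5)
Your proposal is correct and takes essentially the same route as the paper: the paper likewise refines the label set into three shifted copies of $[k]$ (labels $i$, $i+k$, $i+2k$ for the rest, $V_1$, and $V_2$ respectively) plus one special label $3k+1$ reserved for $s$, simulates the original $k$-expression under this refinement, and permutes labels at the end. Your write-up merely spells out the bookkeeping (the nine-fold expansion of each $\eta_{i,j}$, the three-fold expansion of relabelings, and the symbolic tracking of $s$'s working label) that the paper's terser proof leaves implicit.
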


	\begin{restatable}{theorem}{thmcw}
		\label{thm:cw}
		Every tree-like $\Delta$-SOC graph has clique-width at most $16$.
	\end{restatable}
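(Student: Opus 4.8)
The plan is to use the tree $T$ formed by the junction-incident arcs as a decomposition along which to build a bounded-label expression for $G$, treating the confluent edges $E_c$ and the simple edges $E_s$ simultaneously rather than separately (which, as the remark before the statement notes, could blow up the clique-width). First I would root $T$ at an arbitrary arc and record the central geometric fact: since all nodes lie on the boundary circle and the arcs of $T$ are pairwise non-crossing, cutting $T$ at any arc $\gamma$ separates the disk, and hence the boundary, into two contiguous arcs. Writing $A_\gamma \subseteq V$ for the set of nodes in the subtree hanging below $\gamma$, the set $A_\gamma$ is therefore a contiguous block in the cyclic order. I would then construct the expression bottom-up over $T$: for each arc $\gamma$ the partial expression defines $G[A_\gamma]$ (all edges of $E_c$ and $E_s$ with both endpoints in $A_\gamma$), while maintaining a bounded number of labels that encode how the nodes of $A_\gamma$ still interact with $V \setminus A_\gamma$.

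The heart of the argument is that this interface stays bounded at every cut. For the confluent edges this follows from the junction-tree and distance-hereditary structure: every confluent edge leaving $A_\gamma$ routes through the single arc $\gamma$, and by the merge/split behaviour (cf.\ Lemma~\ref{obs:indep_subtrees} and the structure exploited by Eppstein et al.) the nodes of $A_\gamma$ having a confluent edge to the outside fall into a bounded number of classes with identical external confluent neighbourhood. Consequently, when two subtrees meet at a merge-, split-, or $\Delta$-junction, the newly created confluent adjacencies form complete bipartite bundles that can be added by a constant number of $\eta_{i,j}$ operations, after which a constant number of $p_{i\to j}$ relabelings compress the interface down to the classes seen by the parent arc. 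For the simple edges I would use that $E_s$ is a set of non-crossing chords, so $G-E_c$ is outerplanar and, more importantly, any node outside $A_\gamma$ is joined by simple edges only to a contiguous sub-block of $A_\gamma$; hence the outerplanar skeleton can be built along the \emph{same} laminar family of contiguous arcs, and its clique-width-$5$ construction proceeds in lockstep with the confluent construction.

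To fuse the two constructions along the shared decomposition while keeping the designated interface labels available, I would invoke Observation~\ref{obs:cw}: reserving labels for the (constantly many) confluent access classes and for the outerplanar interface, and applying the $(3k+1)$-transformation to the outerplanar skeleton with $k=5$, yields the stated bound $3\cdot 5 + 1 = 16$. The final $16$-expression for $G$ is assembled by following the rooted tree $T$ from the leaves to the root: at each junction one takes the disjoint union of the child expressions, inserts the confluent bundles together with the crossing simple edges, and relabels down to the parent interface.

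I expect the main obstacle to be exactly this simultaneous bounding of both interfaces along a single decomposition. Treating $E_c$ and $E_s$ in isolation is hopeless, so the crux is the geometric observation that both edge sets are drawn without crossings inside the disk and therefore both respect the cyclic order and the laminar family of contiguous arcs induced by $T$. Making precise that the confluent access classes at a cut are genuinely constant—so that the relabelings forced at merge-, split-, and $\Delta$-junctions never have to distinguish more than boundedly many node types—and that this synchronizes with the outerplanar construction without the label count drifting upward, is where the real work lies.
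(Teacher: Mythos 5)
Your proposal follows essentially the same route as the paper's proof: a leaves-to-root induction along the rooted junction tree, building each contiguous boundary block by a modular expression whose interface to the rest of the graph consists of the vertices with identical external confluent neighbourhood plus the border vertices, with the purely-simple-edge stretches handled as outerplanar pieces via Observation~\ref{obs:cw} with $k=5$, giving $3\cdot 5+1=16$. The precision you flag as ``the real work'' is exactly what the paper supplies through its notions of depth, regions, and the four vertex groups \textbf{A}--\textbf{D}: all vertices of a region with confluent edges to the outside have \emph{identical} outside neighbourhoods (a single class, not just boundedly many), and only the at most two border vertices of a region can carry simple edges out of it, which is what keeps the label count from drifting.
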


	\noindent \textit{Proof (Sketch).}
	We begin by partitioning the edge set of the considered $\Delta$-SOC graph into $E_c$ and $E_s$, as explained above, and by setting an arbitrary arc incident to a junction as the root $r$. Given a tree-like $\Delta$-SOC drawing of the graph, our aim will be to pass through the confluent arcs of the drawing in a leaves-to-root manner so that at each step we construct a $16$-expression for a certain circular segment of the outer face. This way, we will gradually build up the $16$-expression for $G$ from modular parts, and once we reach the root we will have a complete $16$-expression for $G$.

	At its core, the proof partitions nodes in the drawing into \emph{regions}, delimited by arcs connecting nodes and junctions (such nodes are \emph{not} part of any region). Each region is an outerplanar graph (which has clique-width at most 5), and furthermore the nodes in a region can only be adjacent to the nodes on the boundary of that region. Hence, by Observation~\ref{obs:cw} using $k=5$, each region can be constructed by a $16$-expression which also uses separate labels to capture the neighborhood of that region to its  border. See Fig.~\ref{fig:cliquewidth} for an illustration.

	\begin{wrapfigure}[14]{r}{.45\textwidth}
		\vspace{-.3cm}
		\centering
		\includegraphics[scale=1]{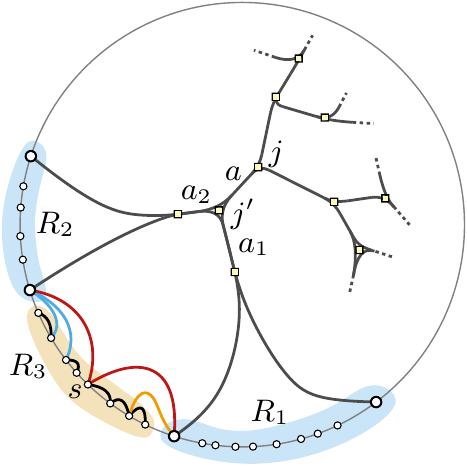}
		\caption{Sketch of a tree-like $\Delta$-SOC graph $G$ with its regions.}
		\label{fig:cliquewidth}
	\end{wrapfigure}
	The second ingredient used in the proof is tied to the tree-like structure of the drawing. In particular, one cannot construct a $16$-expression (and even any $k$-expression for constant $k$) by simply joining the regions together in the order they appear along the outer face. Instead, to handle the adjacencies imposed by the paths in the drawing, one needs to process regions (and their bordering vertices) in an order which respects the structure of the tree. To do so, we introduce a notion of \emph{depth}: nodes have a depth of $0$, while junctions have depth equal to the largest depth of its ``children'' plus $1$. Regions are then processed in an order which matches the depth of the corresponding junctions: for instance, if in Fig.~\ref{fig:cliquewidth} one of the junctions $a_1$ and $a_2$ has depth $d$ then junction $j'$ has depth $d+1$, and so the blue regions will be constructed by modular $16$-expressions before the yellow one. Afterwards, all three regions $R_1, R_2, R_3$ will be merged together into a blue region with a single $16$-expression. By iterating this process, upon reaching the root $r$ we obtain a $16$-expression that constructs the whole $\Delta$-SOC graph.
	\hfill $\qed$

	\section{Conclusion}
	While this work provides the first in-depth study of SC and SOC graphs, a number of interesting open questions remain. One such question is motivated by our results on the cop-number of SOC graphs: we showed that SOC graphs are incomparable to most classes identified to have cop number two by Gaven\u{c}iak et al.~\cite{DBLP:conf/isaac/GavenciakJKK13}, but we could not show such a result for the class of interval-filament graphs~\cite{gavril2000maximum}.
	It seems likely that SOC graphs are contained in this class. Similarly, it is open whether SC graphs are contained in subtree-filament graphs. Furthermore, it is conceivable that a similar construction  for the inclusion in string graphs, Section~\ref{sec:string}, could be used to show similar results for non-strict confluent graphs. Finally, investigating the curve complexity of our construction might provide insight into the curve complexity of SC and SOC diagrams.
	
	On the algorithmic side, Section~\ref{sec:cliquewidth} raises the question of whether clique-width might be used to recognize SOC graphs, and perhaps even for finding SOC drawings. Another decomposition-based approach would be to use so-called split-decompositions~\cite{DBLP:journals/dam/GioanP12}, which we did not consider here. It is also open whether all bipartite permutation and trapezoid graphs~\cite{golumbic1984tolerance,brandstadt1987bipartite} are SOC graphs. 
	Since bipartite permutation graphs are equivalent to bipartite trapezoid graphs~\cite{golumbic1984tolerance,brandstadt1987bipartite}, the former represents a promising first step in this direction. It also remains open if it is possible to drop the unit length condition on the intervals in Section~\ref{sec:interval}. We did not see an obvious way of adapting the construction for confluent drawings of interval graphs~\cite{degm-cdvndp-05}.

	\bibliographystyle{splncs04}
	\bibliography{paper}
	
		\clearpage
	
		\appendix
	\section{Omitted Proofs from Section~\ref{sec:string}}
		\obsindepsubtrees*
		\begin{proof}
			Assume $ i,j $ would be two such junctions, where $i$ is an ancestor of $j$ in~$T_u$. Then there are two distinct smooth paths from $v$ to $j$ in $D$: one passing through $i$ and then following the path to $j$ in $T_u$, the other one merging into $T_u$ in junction $j$. This contradicts the strictness of $D$. 
		\end{proof}
	
		\thmsoc*
		\begin{proof}
			Let $ G = (V,E) $ be a SOC graph with a strict outerconfluent drawing $ D = (N,J,\Gamma) $. Construct traces exactly as in the proof of Theorem~\ref{thm:sc} for the SC graphs. Since for each node $ u \in N$ the trace $ t(u) $ starts at the position of $ u $ in $ D $ we immediately know that every trace starts on the boundary of the enclosing disk of $ D $. Further, every trace is a continuous curve and does not leave the enclosing disk of $ D $ by construction. Hence, the constructed set of traces immediately yields an outer-string representation of $ G $. 
		\end{proof}
		
		\section{Omitted Proofs from Section~\ref{sec:interval}}
		\unitIntervalInSC*
	
	\begin{proof}
		Consider a unit interval graph $G$ with a unit-interval layout $\Gamma_{UI}$ of $G$. In $\Gamma_{UI}$, every vertex $v$ is represented by an interval $[\ell(v), r(v)]$ such that $r(v) - \ell(v) = u$ for some constant $u$, and two vertices $v,w$ are connected by an edge, if and only if $\ell(w) \in [\ell(v),r(v)]$ or $r(w) \in [\ell(v),r(v)]$. We can assume that all intervals have distinct endpoints, as otherwise the following modification is possible: Let $v_1, \ldots, v_k$ be $k$ vertices such that $\ell(v) := \ell(v_1) = \ldots = \ell(v_k)$ and let $\varepsilon > 0$. Then we assign new interval coordinates, such that $\ell(v_i) := \ell(v) + i \cdot \varepsilon$. Clearly, since all vertices $v_1, \ldots, v_k$ had the same neighborhood before, we can chose $\varepsilon$ sufficiently small to retain the incidencies. Let $O=(v_1,\ldots,v_n)$ denote the ordering of $V$ such that for vertex $v_i$ it holds that $\ell(v_i) < \ell(v_j)$ for all $j > i$. We call such an ordering $O$ a \emph{left-to-right-ordering}. 
		
		We first indentify subcliques $C_1,\ldots,C_k$ of $G$ such that each vertex is part of exactly one $C_i$ as follows. 
		We say that a vertex $v_i$ is a \emph{leader} in a left-to-right-ordering $O$ if and only if there exists no vertex $v_j$ with $j < i$ and $\ell(v_i) < r(v_j)$ such that $v_j$ is leader.
		Note that by definition $v_1$ is always a leader. 
		The second leader is the first vertex $v_i$ such that $r(v_1) < \ell(v_i)$ and so on. 
		Let $L=(l_1,\ldots,l_k) \subseteq V$ denote the left-to-right-ordered set of leaders. 
		It is easy to see that $L$ is a set of disjoint intervals. 
		We say that $v_j$ is \emph{leader of vertex $v_i$} or $v_j=\text{lead}(v_i)$ if $j \leq i$, $v_j$ is leader and there exists no leader $v_h$ for $j < h < i$. 
		Observe that every vertex $v$ has a uniquely defined leader which is the interval, in which $\ell(v)$ is located. 
		Since all intervals have unit size, there always exists an edge between vertices with the same leader. 
		Hence, all vertices with the same leader form a clique and we define $C_i = \{v \in V|\text{lead}(v)=l_i\}$. 
		Further, if $l_i$ is leader of vertex $v$, since all intervals are unit and leaders are disjoint, $r(v) \in (r(l_i),r(l_{i+1}))$, that is, vertex $v$ can only be connected to two leaders. For an illustration of such a decomposition refer to Figure~\ref{fig:unit_interval}(a).

		Next, we describe how to produce a strict confluent diagram $D$ of $G$. For an example illustration that follows the notation of the proof, refer to Figure~\ref{fig:unit_interval}(b).
		Let $\mathcal{C}=(C_1,\ldots,C_k)$ denote the left-to-right-ordered (according to their leaders) set of cliques. 
		We draw each clique $C_i=(V_i,V_i \times V_i) \in \mathcal{C}$ with the following SOC layout: Let $V_i = (v_1,\ldots,v_k)$ be the left-to-right-ordered set of vertices. 
		We position $v_1,d_2,\ldots,d_{k-1},v_k$ from left to right on a horizontal line $H$, and we position $v_i$ below $d_i$ for $2 \leq i \leq k-1$, where $d_i$ is a $ \Delta $-junction connecting $v_i$ with the two neighbors of $d_i$ on $H$. Note that a $\Delta$-junction smoothly links each pair of the three incident arcs.
		We order the drawings of all $C_i \in \mathcal{C}$ from left to right along $H$, that is the drawing of $C_i$ appears in between the drawings of $C_{i-1}$ and $C_{i+1}$. 
		Note that all vertices can be reached from below. 

		It remains to describe how to realize edges from vertices in $C_i \in \mathcal{C}$ to vertices in $C_{i+1}$.
		Let $C_i=(V_i,V_i \times V_i)$ and let $V_i = (v_1,\ldots,v_k)$ be the left-to-right-ordered set of vertices of $C_i$. 
		Consider edge $(v_\ell,w) \in E$ for $v_\ell \in V_i$ and $w \in V_{i+1}$. 
		Since$(v_\ell,w)$ exists, in $\Gamma_{UI}$, it holds that $\ell(v_\ell) < \ell(w) < r(v_\ell)$. 
		For all $v_j \in V_i$ with $j \geq \ell$, it obviously holds, that $\ell(v_j) \in  (\ell(v_\ell),\ell(w))$. 
		Therefore, also $(v_j,w) \in E$. 
		Further, for $v_\ell$ it clearly holds, that its neighbors $W \in V_{i+1}$ are  consecutive in the left-to-right-order of vertices defined by $\Gamma_{UI}$.  
		This allows us to realize the bundle of edges going from $V_i$ to a consecutive set of vertices $W \in V_{i+1}$ as follows. 
		Let $W \subseteq V_{i+1}$ such that $v_\ell$ is the leftmost neighbor of each $w \in W$. 
		We add a binary junction $b_\ell$ in between $d_\ell$ and $d_{\ell-1}$ (or $v_1$ if $\ell=2$) such that $b_\ell$ is a split junction for $d_\ell$ and a merge junction for $d_{\ell-1}$ and the root $b_r$ of a tree $T_b$ of binary junctions. 
		In $T_b$, each junction is a split junction for its ancestor and each leaf of $T_b$ is connected to a pair of vertices in $W$. 
		We position $T_b$ below $W$ and route the segment between $b_r$ and $b_\ell$ first above the drawing of $C_{i}$, then let it cross line $H$ and finally route it below the drawing of $C_{i+1}$. Since we use this scheme for all $C_i$, we avoid intersections of segments between different pairs of consecutive cliques. 
		Also, since we directly connect to vertices $W$ via $T_b$, we realize all edges exactly once, yielding a strict drawing of $G$.\end{proof}	
	
	\section{Omitted Proofs from Section~\ref{sec:bipartite}}\label{app:bipartite}	
		
		Let $ D = (N,J,\Gamma) $ be a strict outerconfluent diagram and $ \pi $ the clockwise cyclic order of the nodes. Consider $ G_D = (V_D,E_D) $ and use $ \pi $ to order the vertices of $ G_D $ on a circle. When using these vertex positions for a traditional straight-line circular layout, the order $\pi$ determines all the crossings between edges in $ G_D $. Namely two edges $ (a,c),(b,d) \in E_D $ intersect if and only if $ a < b < c < d $ in $ \pi $. We say that such a crossing is (confluently) \emph{representable} if the edges $ (a,b),(c,d) $ or $ (a,d),(b,c) $ are also contained in $ E_D $. If this is the case, then the induced subgraph of the four incident vertices of the crossing edges contains a $K_{2,2}$, which we can represent by a  merge-split pair. Otherwise, the crossing cannot be removed in an outerconfluent diagram with order $\pi$.

		\begin{restatable}[$\star$]{lemma}{lemrepresentable}
			\label{lem:representable}
			Let $ D $ be a strict outerconfluent diagram, $ G_D $ the graph represented by $ D $, and $ \pi $ the order inferred from $ D $. If we use $ \pi $ to create a circular straight-line layout of $G_D$, then every crossing is representable.
		\end{restatable}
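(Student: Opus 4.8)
The plan is to translate the combinatorial condition ``$(a,c)$ and $(b,d)$ cross'' into a topological statement about the smooth paths representing these two edges, and then to extract the required parallel edges by a local ``swap'' at the place where the paths meet. Concretely, suppose $(a,c),(b,d)\in E_D$ interleave, so $a<b<c<d$ in $\pi$; let $p$ be the unique $ac$-path and $q$ the unique $bd$-path of $D$ (unique by strictness), and place the nodes on the boundary circle according to $\pi$. First I would observe that, since $D$ is outerconfluent, $p$ is a simple curve in the closed disk whose endpoints $a,c$ cut the boundary circle into two arcs, one containing $b$ and the other containing $d$. Thus $p$ separates $b$ from $d$ inside the disk, and any curve from $b$ to $d$ in the disk---in particular $q$---must intersect $p$. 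Because $a,b,c,d$ are pairwise distinct and a path visits no interior node, $p$ and $q$ share no node, so every point of $p\cap q$ lies on a shared junction.

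The second step is to read off the structure of the meeting from the binary-junction normal form. At a binary junction the three incident arcs enclose angles $180^\circ,180^\circ,0^\circ$, so any smooth path through the junction uses the ``opposite'' arc together with exactly one of the two same-side arcs. Hence at a shared junction $p$ and $q$ must traverse the same opposite arc but different same-side arcs; in particular they share at least one whole arc. Taking the maximal contiguous subpath $m$ common to $p$ and $q$, its two ends are distinct junctions $i$ and $j$ (an end cannot be a node, since that node would be a common endpoint of $p$ and $q$): at $i$ the two paths enter $m$ along distinct arcs, and at $j$ they leave $m$ along distinct arcs. When $m$ is a single arc, $i,j$ is precisely a \emph{merge--split pair} in the sense of \cref{sec:preliminaries}.

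Finally I would perform the swap. Writing $p = p_1\cdot m\cdot p_2$ and $q = q_1\cdot m\cdot q_2$, where $p_1,q_1$ run from $a,b$ to $i$ and $p_2,q_2$ run from $j$ to $c,d$, the two recombinations $p_1\cdot m\cdot q_2$ and $q_1\cdot m\cdot p_2$ are again smooth curves: at $i$ each reuses one of the original smooth turns into $m$, and at $j$ each reuses one of the original smooth turns out of $m$. Neither recombination passes through an interior node, so each witnesses an edge of $G_D$ joining an endpoint of $p$ to an endpoint of $q$. The only two cross-pairings of $\{a,b,c,d\}$ are $\{(a,b),(c,d)\}$ and $\{(a,d),(b,c)\}$, and these are exactly the two representable configurations; hence the crossing is representable. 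The main obstacle I anticipate is making the separation argument fully rigorous---that $p$ genuinely disconnects $b$ from $d$ in the disk and that $q$ therefore meets $p$ at a junction rather than merely grazing it---together with the bookkeeping needed to certify that each recombined curve is smooth at $i$ and $j$ and avoids all other nodes.
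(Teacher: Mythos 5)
Your proposal is correct and takes essentially the same approach as the paper's own proof: the interleaved endpoints $a<b<c<d$ force the unique $ac$- and $bd$-paths to meet, the meeting must occur along a shared subpath whose end junctions form a merge--split configuration, and rerouting through that shared portion produces the two smooth paths witnessing either $(a,d),(b,c)$ or $(a,b),(c,d)$, i.e.\ a $K_{2,2}$. The paper states this in condensed form (directly asserting that the two paths merge at two junctions and that the corresponding edges exist), while you supply the Jordan-curve separation, binary-junction, and smoothness details that justify those assertions.
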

		\begin{proof}
			Let $ (a,c),(b,d) \in E_D $ be two edges that have a crossing in a circular layout of $G_D$ with order $\pi$. Let $ p,q \in P(D) $ be two paths corresponding to these edges in $ D $. Since the order of the vertices and nodes is the same we know there exist two junctions $ i,j \in J $ such that either $a,b$ merge at $i$ and $c,d$ at $j$ or $a,d$ merge at $i$ and $c,b$ at $j$. This means that the edges $ (a,b) $ and $ (c,d) $ or $(a,d)$ and $(b,c)$ exist as well, so for every crossing we find that it is part of a $ K_{2,2} $. 
		\end{proof}
		
		It is clear that a graph can only have a strict outerconfluent drawing if it has a circular layout with all crossings representable. This is not sufficient  though, as there are such graphs that have no strict outerconfluent drawing. We obtain two 6-vertex obstructions for strict outerconfluent drawings, namely a $K_{3,3}$ with an alternating vertex order and a domino graph in bipartite order, see Fig.~\ref{fig:outer_circ_counter} and Fig.~\ref{fig:domino_counter}.

		\begin{restatable}[$\star$]{observation}{obskthreethree}
			\label{obs:k33}		
			Let $ G = (V,E) $ be a graph and $ V' \subseteq V $ a subset of six vertices such that $ G[V'] $ is isomorphic to $ K_{3,3} $ and let $X \cup Y = V'$ be the corresponding bipartition. Now let $\pi$ be a cyclic order of $ V' $ in which vertices from $ X $ and $ Y $ alternate, then there is no strict outerconfluent diagram $ D = (N,J,\Gamma) $ with order $ \pi $ and $ G_D = G[V'] $ or, consequently, $G_D = G$. %
		\end{restatable}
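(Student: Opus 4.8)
\noindent\textit{Proof (Plan).}
I would fix the alternating cyclic order $\pi=(x_1,y_1,x_2,y_2,x_3,y_3)$ with bipartition $X=\{x_1,x_2,x_3\}$ and $Y=\{y_1,y_2,y_3\}$. Placing the six nodes on a circle in this order and drawing straight chords, $K_{3,3}$ becomes the boundary hexagon (the six edges $x_iy_i$, $x_{i}y_{i-1}$) together with the three \emph{long diagonals} $x_1y_2$, $x_2y_3$, $x_3y_1$, and a direct check shows these three diagonals pairwise cross. Now suppose toward a contradiction that a strict outerconfluent diagram $D=(N,J,\Gamma)$ with order $\pi$ and $G_D=K_{3,3}$ exists, and let $P_1,P_2,P_3$ be the unique paths realizing the three diagonals; by the preliminaries each $P_i$ is a simple curve inside the disk with its two endpoints on the boundary.

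The first step is a Jordan-curve argument: each $P_i$ separates the disk into two parts, and the endpoints of the other two diagonals fall on opposite parts (e.g.\ $P_1=x_1y_2$ separates $\{y_1,x_2\}$ from $\{x_3,y_3\}$, and $x_2,x_3$ lie one on each side). Since arcs of $D$ never cross but only merge, each pair $P_i,P_j$ must therefore share at least one common sub-path. The second step pins down the orientation of this shared portion: by Lemma~\ref{lem:representable} (equivalently, directly from the fact that $G_D$ is bipartite), the two endpoints it merges must lie in the same class, since the opposite pairing would create an adjacency inside $X$ or inside $Y$, which $K_{3,3}$ does not contain. This forces three merge--split structures $\sigma_1,\sigma_2,\sigma_3$ realizing the $K_{2,2}$'s on $\{x_1,x_3\}\times\{y_1,y_2\}$, $\{x_1,x_2\}\times\{y_2,y_3\}$ and $\{x_2,x_3\}\times\{y_1,y_3\}$, respectively, where in each case the two $X$-endpoints merge and the path splits toward the two $Y$-endpoints.

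The contradiction then comes from following the diagonal $P_1=x_1y_2$, which passes through \emph{both} $\sigma_1$ and $\sigma_2$ and hence encounters them in series; entering each at its $X$-merge end and leaving at its $Y$-split end, the split end of the first structure is joined by $P_1$ to the merge end of the second. Tracing smooth continuations, a third vertex that enters the first structure from the merge side can follow the shared arc to the split junction and take the branch that $P_1$ uses into the second structure. Concretely, if $\sigma_1$ comes first, then $x_3$ (which feeds $\sigma_1$ together with $x_1$) can continue through the connecting arc into $\sigma_2$ and reach $y_3$, producing a second smooth $x_3y_3$-path in addition to the one realized by $\sigma_3$; the reverse order yields a duplicate $x_2y_1$-path in the same way. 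Either way strictness is violated, so no such $D$ exists. Finally, the ``consequently $G_D=G$'' clause follows by restricting a hypothetical strict outerconfluent diagram of $G$ to the nodes of $V'$, which leaves a strict outerconfluent diagram of $G[V']$ with the induced alternating order.

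The main obstacle is the last step: one must argue carefully that the forced serial composition of two bipartite-oriented merge--split structures along $P_1$ genuinely creates a \emph{second, distinct} smooth path for an edge already realized by $\sigma_3$. This requires handling both orderings of $\sigma_1,\sigma_2$ along $P_1$ (symmetric, so one case suffices) and verifying that any additional shared sub-paths between the diagonals cannot repair strictness rather than break it.
\hfill$\qed$
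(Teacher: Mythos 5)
Your plan follows the paper's proof in outline: resolve the three pairwise crossings of the long diagonals into merge--split structures whose orientation is forced by bipartiteness, then extract a violation of strictness. Up to that point your steps are sound. The genuine gap is at the decisive step, and it is not the one you flag: you assume that $P_1$ meets $\sigma_1$ and $\sigma_2$ \emph{in series}, and your remark that ``both orderings are symmetric'' covers only the two serial orders. The structures can instead be \emph{nested} along $P_1$, i.e.\ the merge of $\sigma_2$ (joining the routes of $x_1$ and $x_2$) occurs before the merge of $\sigma_1$ (joining in $x_3$), with the two splits in the reverse order. Because junctions are binary, nesting forces the routes of $x_1,x_2,x_3$ onto a common ``trunk'' arc before any split, and then $\sigma_1$ and $\sigma_3$ are the \emph{same} structure: the path you trace through $\sigma_1$ and then $\sigma_2$ coincides with the path through $\sigma_3$, and no second $x_3y_3$-path appears. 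Indeed, this trunk configuration is precisely the standard \emph{strict} confluent drawing of $K_{3,3}$, so no strictness violation can possibly be extracted from it; what fails for it is the outerplanar embedding with the alternating order, not uniqueness of paths. So your proof cannot be closed by more careful tracing alone.

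To exclude nesting you need a topological ingredient, for example: in the trunk configuration the union of the nine paths is topologically a tree embedded in the disk with its six leaves on the boundary circle, and deleting the trunk arc separates the leaves into $X$ and $Y$; a standard Jordan-curve argument shows that the leaves of each component of a planarly embedded tree must occupy a contiguous boundary arc, which is impossible when $X$ and $Y$ alternate. (The paper sidesteps this case by arguing in the straight-line circular layout and replacing the three crossings one at a time by \emph{local} merge--split pairs, so seriality is built in.) By contrast, the obstacle you do flag---distinctness of the two $x_3y_3$-paths in the serial case---can be resolved: the traced path consists of a prefix of $P_3$, the shared arc of $\sigma_1$, a middle piece of $P_1$, the shared arc of $\sigma_2$, and a suffix of $P_2$, in this order along a simple curve; hence it cannot contain the junction at which $P_2$ and $P_3$ merge, so it differs from the path through $\sigma_3$ and strictness is indeed violated in that case.
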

		\begin{proof}
			Let $ G = (V,E) $, $ V' $ and $ \pi $ as above. Draw $ G $ in a circular layout with the edges as straight lines. We then find the vertices in $ V' $ drawn as in Fig.~\ref{fig:outer_circ_counter}. Now, when creating the strict outerconfluent diagram $ D $ of $ G $, we at some point need to replace the crossings in $ G[V'] $ with confluent arcs and junctions. We can do this step-by-step.
			
			The first two crossings we can pick arbitrarily due to the symmetry of $K_{3,3}$ and replace them with two  merge-split pairs. In Fig.~\ref{fig:outer_circ_counter} we choose $ (x,z), (v,w) $ and $ (u,y),(x,z) $. This creates a diagram with one remaining crossing between two arcs. Since we replaced two representable crossings we did not change the adjacencies and hence this crossings can be replaced as well. The only possibility to replace the crossing with confluent junctions though is to put another merge-split pair. This, however, introduces a cycle into the diagram as shown in the second step in Fig.~\ref{fig:outer_circ_counter}. Now there are two distinct paths to go from $ u $ to $ v $, not counting multiple turns around the circular path, which contradicts strictness. 
		\end{proof}
	
	\obsdom*
		\begin{proof}
			Similarly to Observation~\ref{obs:k33} we need to transform a bipartite circular straight-line layout of the domino graph into an outerconfluent one by replacing crossings with merge-split junctions.
			For the order of vertices depicted in Fig.~\ref{fig:domino_counter}   it is clear that both crossings need to be replaced by merge-split pairs, but this introduces two distinct paths connecting $ u $ and $ v $, which contradicts strictness.
			One can observe that any other bipartite order of the vertices produces at least one non-representable crossing. Thus no outerconfluent drawing exists in such a case, regardless of strictness. 
		\end{proof}

		\begin{figure}[tb]
			\begin{minipage}[t]{.58\textwidth}
				\includegraphics[width=\textwidth]{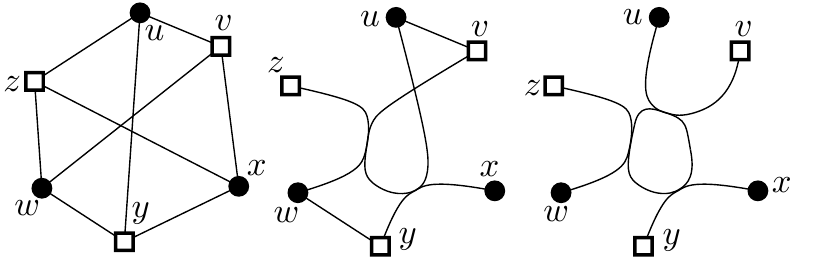}
				\caption{Forbidden alternating order of $ K_{3,3} $.}
				\label{fig:outer_circ_counter}
			\end{minipage}
			\hfill
			\begin{minipage}[t]{.4\textwidth}
				\includegraphics[width=\textwidth]{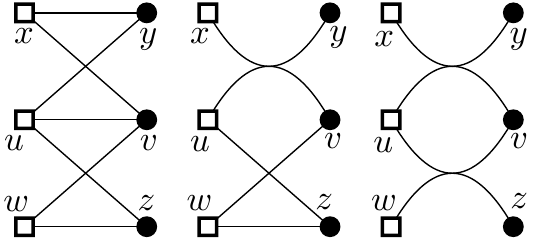}
				\caption{Forbidden domino order.}
				\label{fig:domino_counter}			
			\end{minipage}	
		\end{figure}

	\lemcsix*
		\begin{proof}
			Let $ p,q \in P(D) $ be two distinct $ uv $-paths in a reduced confluent diagram $ D = (N,J,\Gamma) $. We find two minimal distinct sub-paths $ p' $ of $ p $ and $ q' $ of $ q $ between two junctions $i,j$ of $p$ and $q$.
			
			First we assume that $i$ is a split-junction of $p,q$ and $j$ a merge-junction of $p,q$.		
			We claim that each of $p'$ and $q'$ must contain at least two junctions that form a merge-split pair. The argument is symmetric, so we focus on $p'$. If $p'$ passes through no junction, the arc of $p'$ can be removed since $q'$ achieves the same connectivity, but this contradicts that $D$ is reduced. If $p'$ passes through exactly one junction $i_1$, then the arc of $p'$ that does not split at $i_1$ can be removed without changing any node adjacencies in $G_D$. So $p'$ must contain at least two junctions $i_1$ and $i_2$. Assume there would be no merge-split pair. This means, coming from $i$ path $p'$ passes through a sequence of split-junctions followed by a sequence of merge-junctions. But in that case the arc connecting the last split-junction with the first merge-junction can be removed. So there must be a merge-split pair on $p'$ and similarly on~$q'$.
			
			Next we follow each arc of these two merge-split pairs that does not lie on $p'$ and $q'$ towards some reachable node. This yields four nodes $ x,y,w,z \in N$ that together with $u$ and $v$ form a domino subgraph as in Observation~\ref{obs:dom}, which is in fact a $C_6$ with a chord.
			
			Assume there are no two $uv$-paths as in the first case. Then $i$ is a merge-junction of $p,q$ and $j$ is a split-junction of $p,q$, see Fig.~\ref{fig:pointless_circle}. In this case, one of the paths, say $q'$ contains a cycle and visits $i$ and $j$ twice. Obviously all arcs of $p'$ are essential as removing them would disconnect $u$ and $v$. By the same arguments as before, there must be at least one merge-split pair on $q'$ as otherwise we can delete an arc of $q'$. However, if there is a single minimal merge-split pair $i_1, i_2$ on $ q' $ (i.e., two directly adjacent junctions), then one can reroute the arcs joining $q'$ in $i_1$ and $i_2$ towards $i$ and $j$, respectively and remove two arcs from $q'$, see Fig.~\ref{fig:domino_counter}. Hence there must be at least two merge-split pairs on $q'$ and we can find six nodes that form a $K_{3,3}$ subgraph as in Observation~\ref{obs:k33}, which is again a $C_6$ with at least one chord. 		
		\end{proof}	
		
				\begin{figure}[tbp]
				\centering
					\includegraphics{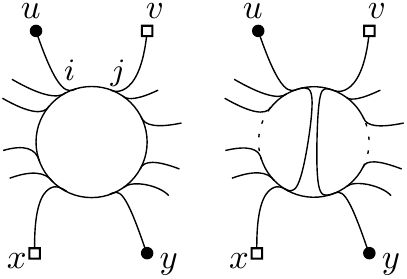}
					\caption{A circular path with only two merge-split pairs can be redrawn without change of the node order.}
					\label{fig:pointless_circle}
			\end{figure}
		
		\begin{figure}[tbp]
			\centering
			\includegraphics{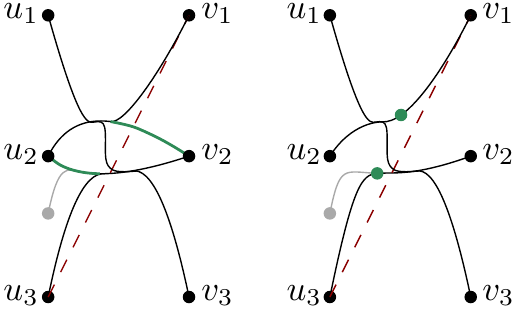}
			\caption{Redrawing a $ K_{3,3} $ minus an edge. The red edge is missing. Since the graph is bipartite, we find that for every path using one of the green arcs on the left, we can redraw it such that it merges into the path coming from $ u_3 $ at the green marker without creating any wrong adjacencies.}
			\label{fig:redrawbip}
		\end{figure}
		
		\thmbipartite*
		\begin{proof}
			Let $ G = (V,E) $ be a (bipartite-permutation $ \cap $ domino-free) graph. By Theorem~\ref{thm:bp} we can find a bipartite-outerconfluent diagram $ D = (N,J,\Gamma) $ which has $ G_D = G $. Now assume that $ D $ is reduced but not strict. In this case we find six nodes $ N' \subseteq N $ corresponding to a vertex set $ V' \subseteq V_D $ in $ G_D$  such that $ G_D[V'] = (V', E') $ is a $ C_6 $ with at least one chord by Lemma~\ref{lem:c6}. In addition, since $ D $ (and hence also $G_D$) is bipartite and domino-free, we know there are two or three chords. But then $ G_D[V'] $ is just a $ K_{3,3} $ minus one edge $ e \in E' $ or $ K_{3,3} $. In a bipartite diagram these can always be drawn in a strict way.
				
			Let $ V' = \{u_1,u_2,u_3,v_1,v_2,v_3\} $ where the $ u_i $ are on one side of the bipartition and the $ v_i $ on the other; they are ordered by their indices from top to bottom. First, observe that since $ G $ is a bipartite permutation graph and the algorithm by Hui et al.~\cite{hui2007train} uses the the strong ordering on the vertices, we get that in a $ K_{3,3} $ minus an edge this edge is either $ u_1v_3 $ or $ u_3v_1 $. Further we can assume that the non-strict doubled path is between $ u_2 $ and $ v_2 $ since $ D $ is reduced. For if this was not the case we would find that there are two merge-split pairs with vertices all below or above $ u_2,v_2 $ in $ D $, but then one of these pairs has junctions with both $ uv $-paths and we can simply reduce it.
				
			Now, w.l.o.g., assume $ u_3v_1 $ is the missing edge. It follows that $ u_1v_3 $ exists as an edge and the $ u_1v_3 $ path must also have junctions with both $ u_2v_2 $ paths. Furthermore, we know both these junctions are merge junctions for $ u_1 $ and $ u_2 $, $ v_2 $ and $ v_3 $ respectively. Thus we can redraw as in Fig.~\ref{fig:redrawbip}. For the case of no edge from $K_{3,3}$  missing the same argument applies.

			For the other direction, consider a strict bipartite-outerconfluent diagram $ D=(N,J,\Gamma) $. By Theorem~\ref{thm:bp}, $ G_D  $ is a bipartite permutation graph, and by Observation~\ref{obs:dom}, it must be domino free. Thus, $ G_D  $ must be as described.
		\end{proof}

		\section{Omitted Proofs from Section~\ref{sec:cops}}
		Consider a SOC drawing $ D = (N,J,\Gamma) $ of a graph $G=(V,E)$, which we can assume to be connected. For nodes $ u,v \in N $, let the node interval $ N[u,v] \subset N $ be the set of nodes in clockwise order between $u$ and $v$ on the outer face, excluding $ u $ and $ v $. Let the cops be located on nodes $C\subseteq N$ and the robber be located on $r\in N$. We say that the robber is \emph{locked} to a set of nodes $N' \subset N $ if $r\in N'$ and every path from $r$ to $N\setminus N'$ contains at least one node that is either in $C$ or adjacent to a node in $C$; in other words, a robber is locked to $N'$ if it can be prevented from leaving $N'$ by a cop player who simply remains stationary unless the robber can be caught in a single move. The following lemma establishes that a single cop can lock the robber to one of two ``sides'' of a SOC drawing.

		\lemblock*
		\begin{proof}
			Assume that, w.l.o.g., $r\in N[u,v]$, and consider an arbitrary path $P$ in~$G$ from $r$ to $N\setminus (N[u,v]\cup \{u,v\})$, which contains neither $u$ nor a neighbor of $u$. Consider the first edge $xy$ on $P$ such that $y\not \in N[u,v]$, and consider the $x$-$y$ path in $D$. Since $y\neq u$ and $y\neq v$, it must hold that the $x$-$y$ path in $D$ crosses the $u$-$v$ path at some junction. Hence $x$ must either be adjacent to $u$ or to $v$; in the former case, this immediately contradicts our assumption that $P$ contains no neighbor of $u$. In the latter case, it follows that there must be a junction on the $u$-$v$ path in $D$, which is used by the $x$-$y$ path to reach $y$, and hence $u$ must also be adjacent to $y$---once again contradicting our initial assumption about $P$.
		\end{proof}
	
		Let $ u,v \in N $ be two nodes of a SOC diagram $ D = (N,J,\Gamma) $. 
		We call a neighbor $ w $ of $u$ in $N[u,v] $ \emph{cw-extremal}  (resp.~\emph{ccw-extremal})
		for $ u,v $ (assuming such a neighbor exists), if it is the last neighbor of $ u $ in the clockwise (resp.~counterclockwise) 
		traversal of $ N[u,v] $.
		Now let $ u,v $ be two neighboring nodes in $N$, $ w \in N[u,v] $ be the cw-extremal node for $ u $ and $ x \in N[u,v] $ be the ccw-extremal node for $ v $. 
		If $ w $ appears before $ x $ in the clockwise traversal of $ N[u,v] $ we call $ w,x $ the \emph{extremal pair} of the $uv$-path, see Fig.~\ref{fig:cops_and_robbers}(b) and (c).
		In case only one node of $u,v$ has an extremal neighbor $w$, say $u$, we define the extremal pair as $v,w$.
		
		In the following we assume that for a given $uv$-path the extremal pair exists.
		
		\begin{figure}[tbp]
			\centering
			\includegraphics{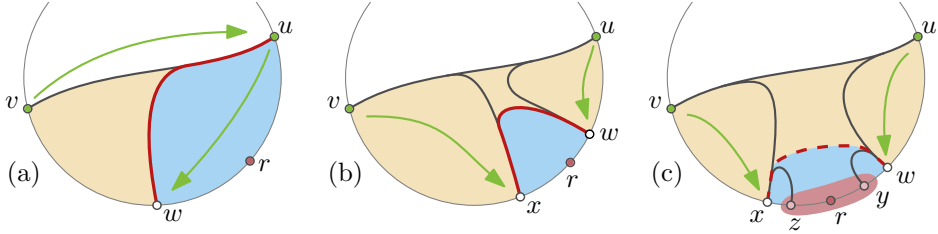}
			\caption{Moves of the cops to confine the robber to a strictly smaller range.}
			\label{fig:cops_and_robbers}
		\end{figure}

		\lembicomponent*
		\begin{proof}
			In case $ r \in N[u,w] $ or $ r \in N[x,v] $ we can swap the cops in one move (see Fig.~\ref{fig:cops_and_robbers}(a)) by moving the cop from $ v $ to $ u $ and from $ u $ to $ w $ in the former case and from $ u $ to $ v $ and $ v $ to $ x $ in the latter. This locks the robber to $ N[u,w] $ or $ N[x,v] $ by Lemma~\ref{lem:block}.
			
			The remaining case is $ r \in N[w,x] $. By construction of the extremal pair, no $y \in N[w,x]$ is a neighbor of $u$ or $v$. Because $G$ is a connected graph, there must be at least one $y \in N[w,x]$ that is a neighbor of $w$ or $x$ as the only smooth paths leaving $N[w,x]$ must share a merge junction with $u$ or $v$  on the paths towards $w$ or $x$. 
			In the next step, we move the cops from $u$ and $v$ to $w$ and $x$. 
			We need to distinguish two subcases. 
			If $r$ is not a neighbor of $w$ or $x$, then this position obviously locks the robber to $N[w,x]$ as any path leaving $N[w,x]$ must pass through a neighbor of $w$ or $x$.
			If, however, $r$ is already at a neighbor of $w\ne u$ or $x \ne v$, it may escape from $N[w,x]$ in the next move to a node in $N[u,w]$ or $N[x,v]$.
			But then by Lemma~\ref{lem:block} it locks itself to $N[u,w]$ or $N[x,v]$. Note that if  $w=u$ or $x=v$, then there is no way for $r$ to escape across $w$ or $x$, respectively, as $r$ would be a neighbor of $u$ or $v$ in that case.
		\end{proof}

		\lembicatch*
		\begin{proof}
			First assume that there is a path in $ G $ connecting $ w $ and $ x $, which passes only through nodes in $N[w,x]$; see Fig.~\ref{fig:cops_and_robbers}(c). 
			Let $ y \in N[w,x] $ be the ccw-extremal node of $ x $. If $ r \in N[y,x] $ we are done and by Lemma~\ref{lem:block} we can move the cop from $ w $ to $ y $ as the cop at $x$ suffices to lock the robber to $N[y,x]$.
			
			Now let  $ r \in N[w,y] $ instead and move the cop from $x$ to $y$. 
			As in the proof of the previous lemma, there are two subcases. 
			If $r$ is not a neighbor of $y$, the new position of the cops locks the robber to $N[w,y]$. 
			Otherwise, the robber might escape to $N[y,x]$ but immediately locks itself to $N[y,x]$ by Lemma~\ref{lem:block} and we are done.
			We repeat this process of going to the ccw-extremal node until we eventually lock the robber to some $N[y,z]$ where the $yz$-path is in $P(D)$.

			Now assume that there is no path from $ w $ to $ x $ in $ G $ that passes only through nodes in $N[w,x]$. 
			But then the only possibility for the robber to leave $N[w,x]$ is by passing through a neighbor of just one of $w$ or $x$, say $x$.
			We keep the cop at $x$, which suffices to lock the robber to $N[w,x]$.
			We can thus safely move the other cop first from $w$ to $x$ and from there following a path from one ccw-extremal node to the next until reaching a node $y$ such that the robber is now locked to $N[y,x]$ by the two cops.
			If there is an $xy$-path in $P(D)$ we are done.
			Otherwise we are now in the first case of the proof since by definition of $y$ there is a path from $x$ to $y$ in $G$ passing only through nodes in $N[y,x]$. 
		\end{proof}	
		
		Combining Lemma~\ref{lem:block}, \ref{lem:bi_component} and~\ref{lem:bi_catch} yields the result.
		\thmcopsandrobbers*
		\begin{proof}
			Let $ D = (N,J,\Gamma) $ be a strict-outerconfluent diagram of a (connected) graph $G$. Choose any $uv$-path in $P(D) $ and place the cops on $ u $ and $ v $ as initial turn. The robber must be placed on a node $ r$ that is either in $N[u,v]$ or in $N[v,u]$; by symmetry, let us assume the former. By Lemma~\ref{lem:block}, the robber is now locked to $N[u,v] \neq \emptyset$.
			
			In every move we will shrink the locked interval until eventually the robber is caught.
			Let $ w \in N[u,v]$ be the cw-extremal neighbor of $u$ and let $x \in N[u,v]$ be the ccw-extremal neighbor of $v$, which, for now, we assume to exist both. 
			If the clockwise order of $u,v,w,x$ on the outer face is $u<x<w<v$ then the robber must be either locked to $N[u,w]$ or to $N[x,v]$ by Lemma~\ref{lem:block}, so we move the cops to the respective nodes $u,w$ or $v,x$ and recurse with a smaller interval.
			If the clockwise order is $u<w<x<v$ then $w,x$ is an extremal pair and by Lemma~\ref{lem:bi_component} we can lock the robber to a smaller interval in the next move. 
			In case it is locked to $N[u,w]$ or to $N[x,v]$ we move the cops accordingly and recurse by Lemma~\ref{lem:block}.
			If it is locked to $N[w,x]$, then either there is a $wx$-path in $P(D)$ and again Lemma~\ref{lem:block} applies (Fig.~\ref{fig:cops_and_robbers}(b)) or there is no $wx$-path and Lemma~\ref{lem:bi_catch} applies after moving the cops to $w$ and $x$.
			
			It remains to consider the case that $u,v$ do not both have an extremal neighbor in $N[u,v]$. At least one of them, say $u$, must have a neighbor $w$ in $N[u,v]$---otherwise $G$ would not be connected.
			So we define the extremal pair as $v,w$.
			If the robber is in $N[u,w]$ we can move the cops to $u,w$ as in Fig.~\ref{fig:cops_and_robbers}(a) and apply Lemma~\ref{lem:block}. 
			If $r \in N[w,v]$, then we can move the cops to $v,w$ and Lemma~\ref{lem:bi_catch} applies.		
		\end{proof}

		\section{Non-Inclusion Results for Strict Outerconfluent Graphs}\label{sec:nonincl}
		\begin{figure}[tbp]
			\begin{minipage}[t]{.48\textwidth}
				\centering
				\includegraphics{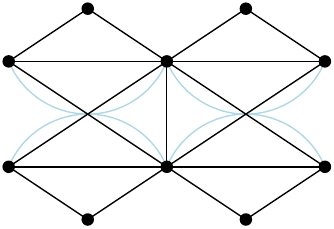}
				\caption{The black graph is not a circle graph, but it has a SOC drawing.}
				\label{fig:circle_counter}
			\end{minipage}
			\hfill
			\begin{minipage}[t]{.48\textwidth}
				\centering
				\includegraphics{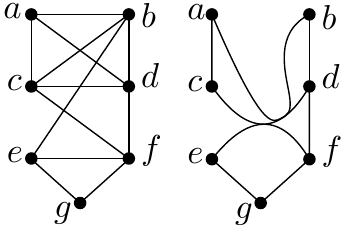}
				\caption{Counterexample for SOC $ \subseteq $ comparability.}
				\label{fig:comp_counter}
			\end{minipage}	
		\end{figure}
	
		\begin{figure}[tbp]
			\begin{minipage}[t]{.48\textwidth}
				\centering
				\includegraphics{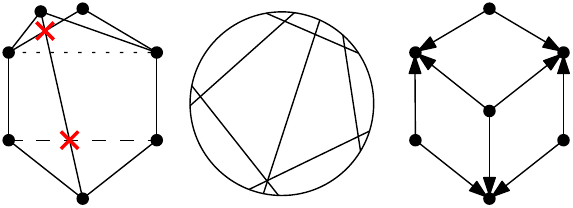}
				\caption{Without dashed and dotted edges, counterexample for comparability $ \subseteq  $ SOC, with dashed edge for circle $ \subseteq $ SOC, and with all edges for co-chordal $ \subseteq $ SOC.}			
				\label{fig:alternation_counter}
				\label{fig:circ_in_soc_counter}
			\end{minipage}		
			\hfill	
			\begin{minipage}[t]{.48\textwidth}
				\centering
				\includegraphics{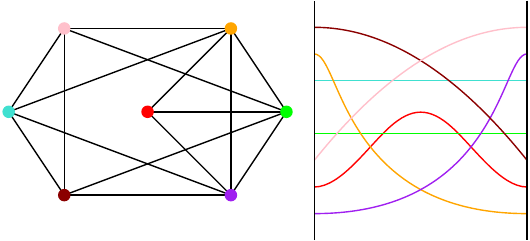}
				\caption{The complement of a subdivided star is a counterexample for co-comparability $ \subseteq $ SOC. Function representation on the right~\cite{GOLUMBIC198337}}
				\label{fig:co_comparability_in_soc_counter}
			\end{minipage}	
			\vspace{-0.5cm}
		\end{figure}
		
		Here we collect smaller results for
		classes of graphs which have non-empty intersection with the class of SOC graphs, but are neither superclasses nor subclasses. Theorem~\ref{thm:smaller} shows our incomparability results, while Corollary~\ref{cor:smaller} lists classes which are not contained in the class of SOC graphs. 
		Already Dickerson et al. \cite{degm-cdvndp-05} noticed that, given a non-planar graph $ G = (V,E) $, if for each edge $ (u,v) \in E $ we create a vertex $ w $ and make it adjacent to $u$ and $v$, then the resulting graph is not even confluent.
		
			\emph{Circle} graphs are graphs that are representable by an intersection model of chords in a circle. For the counterexample we use the following characterization of the class.
		
		\begin{definition}[Bouchet~\cite{bouchet1994circle}]
			\label{def:bouchet}
			The \emph{local complement} $ G*v $ is obtained from $ G $ by complementing the edges induced by $v$ and its neighborhood in $G$. Two graphs are said to be \emph{locally equivalent} if one can be obtained from the other by a series of local complements. %
			A graph $ G $ is a circle graph iff no graph locally equivalent to $ G $ has an induced subgraph isomorphic to $ W_5 $, $ BW_3 $, or $ W_7 $ (see Fig.~\ref{fig:bouchet_forbidden.}).
		\end{definition}
		
		\emph{Circular-arc} graphs are the graphs which have an intersection model of arcs of one circle \cite{hadwiger2015combinatorial}. Let $ ab $, $ cd $ be two non-crossing chords of a circle and $ a,b,c,d $ points on the circle in order $ a,b,c,d $. A \emph{circle trapezoid} then consists of the two chords $ ab $, $ cd $ and the circular-arcs $ bc $ and $ da $. A \emph{circle-trapezoid} graph is a graph which can be represented by intersecting circle trapezoids of one circle \cite{felsner1997trapezoid}. \emph{Co-comparability} graphs are the intersection graphs of $x$-monotone curves in a vertical strip \cite{GOLUMBIC198337}. The \emph{polygon-circle} graphs are the graphs which have an intersection model of polygons inscribed in the same circle \cite{kostochka1997covering}. \emph{Interval filament} graphs, defined by Gavril~\cite{gavril2000maximum}, are intersection graphs of continuous, non-negative functions defined on closed intervals, such that they are zero-valued at their endpoints.
		
		The \emph{comparability} graphs are the transitive orientable graphs. We also use the forbidden subgraph characterization by Gallai \cite{gallai1967transitiv}. The \emph{alternation} graphs are the graphs which have a \emph{semi-transitive orientation} \cite{halldorsson2011alternation}. For a graph $ G= (V,E) $ a semi-transitive orientation is acyclic and for any directed path $ v_1,\dots,v_k $ we either find $ (v_1,v_k) \not\in E $ or $ (v_i,v_j) \in E $ for all $ 1 \leq i < j < k $.
		
		The \emph{chordal} graphs are the graphs, which have no chord-less induced $ C_4 $ \cite{hajnal1958auflosung,golumbic2004algorithmic}. \emph{Co-chordal} graphs are the complement graphs of chordal graphs. \emph{Series-parallel} graphs are the graphs constructed from a multigraph, consisting of one vertex and a loop, by subdividing or replacing an edge repeatedly with two parallel ones \cite{DUFFIN1965303}.
		
		Finally the class of \emph{pseudo-split} graphs contains the graphs $ G = (V,E) $ such that $ V $ can be partitioned into three sets $ C $ (a complete graph), $ I $ (an independent set) and $ S $ (if not empty an induced $ C_5 $). Further every vertex in $ C $ is adjacent to every vertex in $ S $ and every vertex in $ I $ is non-adjacent to every vertex in $ S $.
		
		\begin{figure}[tbp]
			\centering
			\includegraphics{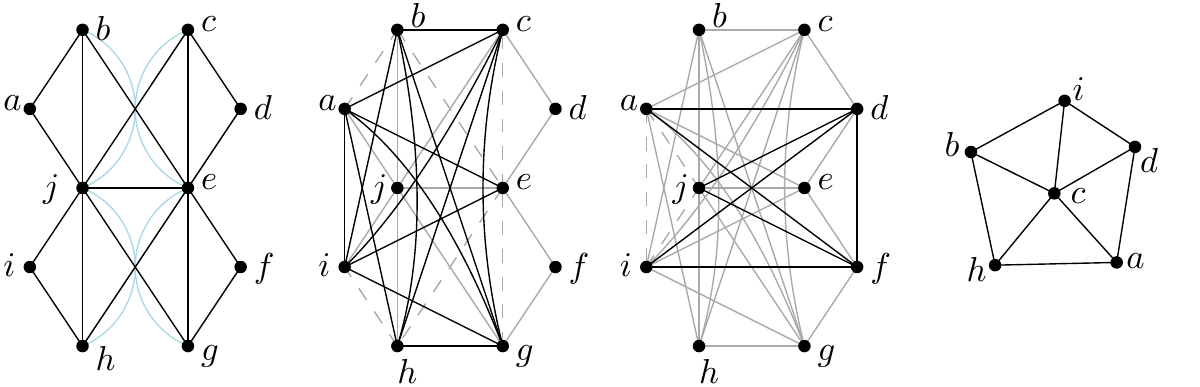}
			\caption{Counterexample for SOC $ \subseteq $ circle. The light-blue edges give the SOC drawing.}
			\label{fig:circ_counter}					
		\end{figure}	
		
		\begin{restatable}{theorem}{thmsmaller}
			\label{thm:smaller}
			These graph classes are incomparable to SOC graphs: circle, circular-arc, (co-)chordal, (co-)comparability, pseudo-split and series-parallel.
		\end{restatable}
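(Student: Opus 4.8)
The plan is to prove incomparability class by class: for each listed class $\mathcal{C}$ we must produce two witnesses, a SOC graph lying outside $\mathcal{C}$ (establishing $\mathrm{SOC}\not\subseteq\mathcal{C}$) and a graph in $\mathcal{C}$ admitting no SOC drawing (establishing $\mathcal{C}\not\subseteq\mathrm{SOC}$). Whenever a witness is displayed in the figures of this section its SOC drawing is given directly, so the only remaining burden is to certify (non)membership in the \emph{other} class from that class's definition or characterization. Many witnesses can be shared; in particular the graph of Fig.~\ref{fig:alternation_counter} serves as a single base example whose successive augmentation by the dashed and then the dotted edges witnesses comparability, circle, and co-chordal non-inclusion into SOC in one shot.

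For the directions $\mathrm{SOC}\not\subseteq\mathcal{C}$ I would rely on small, easily drawn SOC graphs that violate the defining property of the target class. For chordal and for pseudo-split the cycle $C_4$ suffices: being a domino-free bipartite permutation graph it is strict bipartite-outerconfluent and hence SOC, yet it is an induced $C_4$ (forbidden in both classes, since $C_4$ is not a split graph). For co-chordal the graph $2K_2$ is SOC while its complement $C_4$ is not chordal. For series-parallel the graph $K_4$ is SOC (drawn analogously to the $K_5$ diagram of Fig.~\ref{fig:junctions}) but contains a $K_4$-minor. For circle and comparability the witnesses of Figures~\ref{fig:circ_counter} and~\ref{fig:comp_counter} are excluded from those classes via Bouchet's local-complement obstructions (Definition~\ref{def:bouchet}) and Gallai's forbidden-subgraph characterization, respectively, and the remaining circular-arc and co-comparability cases are handled by analogous explicit examples together with the corresponding forbidden-structure characterizations.

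The technically heavier separations are the directions $\mathcal{C}\not\subseteq\mathrm{SOC}$, and this is where the main obstacle lies: to show that a fixed graph admits \emph{no} SOC drawing one must rule out \emph{every} cyclic vertex order, not merely a single one. The tools for this are Lemma~\ref{lem:representable}, which shows that in any SOC diagram each crossing of the induced circular straight-line layout must be representable (embedded in a $K_{2,2}$), together with the order-specific strictness obstructions of Observations~\ref{obs:k33} and~\ref{obs:dom} (an alternating $K_{3,3}$ and a bipartite-ordered domino, both usable as induced obstructions). For each witness I would enumerate the combinatorially distinct cyclic orders of its vertices and argue that every order either forces a non-representable crossing---impossible by Lemma~\ref{lem:representable}---or induces one of the two forbidden configurations, so that no SOC drawing can exist. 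Carrying out this enumeration for the complement of a subdivided star (co-comparability, Fig.~\ref{fig:co_comparability_in_soc_counter}), for the augmented Fig.~\ref{fig:alternation_counter} graph (circle and co-chordal), and for the remaining circular-arc, chordal, and series-parallel witnesses constitutes the bulk of the work, since the finite set of essentially different cyclic orders must be pruned carefully using the automorphisms of each witness to keep the case analysis manageable.
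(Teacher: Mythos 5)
Your overall strategy is the paper's: two witnesses per class, with small SOC graphs ($C_4$, $K_4$, $2K_2$ in place of the paper's self-complementary $C_5$) killing the $\mathrm{SOC}\not\subseteq\mathcal{C}$ directions, and Lemma~\ref{lem:representable} together with the order-specific obstructions of Observations~\ref{obs:k33} and~\ref{obs:dom}, applied over all cyclic orders, killing the $\mathcal{C}\not\subseteq\mathrm{SOC}$ directions. This is exactly how the paper treats circle, co-chordal, comparability ($BW_3$), co-comparability (the complement of the subdivided star, verified by exhaustive search over orders), series-parallel (the domino with subdivided chord), and circular-arc ($W_5$, whose non-representable crossings also appear in the paper's pseudo-split case).

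There are, however, two genuine gaps. First, the chordal case: you name no chordal graph outside SOC, and your certification method cannot be relied upon to handle one. Representability of all crossings plus the two six-vertex obstructions are only \emph{necessary} conditions for a cyclic order to support a SOC drawing (the paper explicitly notes they are not sufficient), so there is no guarantee they expose non-SOC-ness of whatever chordal witness you choose. The paper sidesteps this with a tool absent from your proposal: by Dickerson et al.~\cite{degm-cdvndp-05}, attaching a degree-two vertex to every edge of a non-planar graph yields a graph that is not even \emph{confluent}; applied to $K_5$ this produces a $15$-vertex chordal graph with no confluent, hence no SC and no SOC, drawing. Note that your enumeration plan is hopeless for this witness (roughly $14!/2$ cyclic orders), so you must either import this non-confluence argument or exhibit and fully verify a small chordal non-SOC graph, which you have not done. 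Second, the direction pseudo-split $\not\subseteq\mathrm{SOC}$ is missing entirely: your enumeration list covers co-comparability, circle, co-chordal, comparability, circular-arc, chordal and series-parallel, but never produces a pseudo-split graph that is not SOC. The paper uses $W_5$ here (it is pseudo-split by definition: the hub as the clique, the rim as the $C_5$), and since $W_5$ has only six vertices this does fit your machinery, but as written the case is simply absent.
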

		\begin{proof}

			\textbf{Circle.} Using the characterization of circle graphs due to Bouchet~\cite{bouchet1994circle}, we show that the graph in Fig.~\ref{fig:circle_counter} is not a circle graph. Further, the graph in Fig.~\ref{fig:circ_in_soc_counter} is a circle graph, but has no strict outerconfluent drawing.
			
			\textbf{Circular-arc.} Circular-arc graphs do not contain every complete bipartite graph, but obviously those have a strict-confluent drawing. Conversely this class contains $ W_5 $ and, as observed by Eppstein et al.~\cite{eppstein2016strict} $ W_5 $ is not a SOC graph.
			
			\textbf{Chordal.} $C_4$ is not a chordal graph by definition, but it is a SOC graph. Let $ G = (V,E) $ be a complete graph on five vertices. Attach for every $ (u,v) \in E $ a vertex $ w $ with edges $ (u,w) $ and $ (w,v) $. This graph is  chordal, but it does not even have a confluent drawing by \cite{degm-cdvndp-05}.
			
			\textbf{Co-chordal.} $ C_5 $ is self-complementary and has a strict outerconfluent drawing, but co-chordal graphs do not contain the complement of $C_{n+4} $. Adding the dashed edge in Fig.~\ref{fig:circ_in_soc_counter} makes the graph co-chordal, but the crossings are not representable, so there is no strict outerconfluent drawing.
			
			\textbf{Comparability.} The graph in Fig.~\ref{fig:comp_counter} has a strict outerconfluent drawing, but is among the forbidden subgraphs of the class of comparability graphs \cite{gallai1967transitiv}. Any order of $ BW_3 $ has a crossing that is not representable. So it has no strict outerconfluent drawing, but $ BW_3 $ has a transitive orientation as shown in Fig.~\ref{fig:alternation_counter}. 	
			
			\textbf{Co-comparability.} $ C_5 $ is not a co-comparability graph, but it is a SOC graph. Fig.~\ref{fig:co_comparability_in_soc_counter} shows a graph which is verified to not be SOC by exhaustively searching all orders for represented crossings, but it has an intersection representation of $ x $-monotone curves between two parallel lines~\cite{GOLUMBIC198337}.
			
			\textbf{Pseudo-split.} $C_4$ is not a pseudo split graph, but it is a SOC graph. Any order of $ W_5 $ has a crossing that is not representable. So it has no outerconfluent drawing, but $ W_5 $ is a pseudo-split graph by definition (take the central vertex as the clique and the other five vertices as the $ C_5 $).
			
			\textbf{Series-parallel.} $ K_4 $ is not a series-parallel graph by definition, but it is a SOC graph. Let $ G = (V,E) $ be a domino graph. Subdivide the chord $ (u,v) $ with a vertex $ w $. This is a series-parallel graph, but any crossing between $ (u,w) $ or $ (v,w) $ and another edge cannot be represented.
		\end{proof}
		
		\begin{restatable}[$ \star $]{corollary}{corsmaller}
			\label{cor:smaller}
			These graph classes are not contained in the class of SOC graphs: alternation, circle-trapezoid, polygon-circle, interval-filament, subtree-filament, \mbox{(outer-)string}.
		\end{restatable}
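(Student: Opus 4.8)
The plan is to exploit the fact that ``not contained in SOC'' is inherited upward along the subclass relation: if $\mathcal{A}\subseteq\mathcal{B}$ and $\mathcal{A}$ contains a graph that is not SOC, then $\mathcal{B}$ is also not contained in the class of SOC graphs. Theorem~\ref{thm:smaller} already supplies three concrete non-SOC witnesses---namely $W_5$ (which is circular-arc), $BW_3$ (which is a comparability graph, see Fig.~\ref{fig:alternation_counter}), and the circle graph of Fig.~\ref{fig:circ_in_soc_counter}---so it suffices to locate, for each of the six target classes, one of these witnesses inside an appropriate subclass and then invoke standard class inclusions from the cited literature. No new combinatorial argument is needed; the proof is essentially a bookkeeping of known inclusions, which is exactly why this is stated as a corollary of Theorem~\ref{thm:smaller}.

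Concretely, I would argue as follows. For \emph{alternation} graphs, every transitive orientation is in particular semi-transitive, so comparability graphs are alternation graphs~\cite{halldorsson2011alternation}; hence $BW_3$ witnesses that alternation is not contained in SOC. For \emph{circle-trapezoid} graphs, circular-arc graphs form a subclass~\cite{felsner1997trapezoid}, so $W_5$ suffices. For \emph{polygon-circle}, \emph{interval-filament}, and \emph{subtree-filament} graphs, I would use the chain $\text{circle} \subseteq \text{polygon-circle}~\cite{kostochka1997covering} \subseteq \text{interval-filament}~\cite{gavril2000maximum} \subseteq \text{subtree-filament}~\cite{gavril2000maximum}$, so that the single circle graph of Fig.~\ref{fig:circ_in_soc_counter} propagates upward through all three classes at once. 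Finally, for \emph{(outer-)string} graphs it is cleanest to recall Theorem~\ref{thm:soc}, which gives $\text{SOC} \subseteq \text{outer-string} \subseteq \text{string}$; since every circle graph is an outer-string graph (each chord can be drawn as a string meeting the disk boundary in a single endpoint) and the circle witness is not SOC, neither outer-string nor string is contained in SOC.

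I do not expect any genuine obstacle: all the real work lives in Theorem~\ref{thm:smaller} and in previously established inclusions, and the remaining task is purely to route each witness into the right superclass. The only points requiring care are verifying that each chosen witness really lies in the claimed subclass and attributing the correct inclusion. The two inclusions that are least folklore---$\text{circular-arc} \subseteq \text{circle-trapezoid}$ and $\text{polygon-circle} \subseteq \text{interval-filament} \subseteq \text{subtree-filament}$---should be cited explicitly to Felsner et al.~\cite{felsner1997trapezoid} and Gavril~\cite{gavril2000maximum}, respectively, and I would also double-check the elementary fact that circle graphs are outer-string graphs. Once these inclusions are in hand, the six non-containments follow immediately.
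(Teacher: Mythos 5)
Your proposal is correct and takes essentially the same route as the paper: the paper's proof also just propagates the witnesses from Theorem~\ref{thm:smaller} upward, handling alternation via the inclusion comparability $\subseteq$ alternation~\cite{halldorsson2011alternation} and all remaining classes via the fact that they are known superclasses of circle graphs. Your only deviation---routing circle-trapezoid through circular-arc graphs (witness $W_5$) rather than through circle graphs---is a harmless variant of the same bookkeeping argument.
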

		\begin{proof}
		Comparability graphs are known to be contained in alternation graphs~\cite{halldorsson2011alternation}, but comparability graphs are not a subclass of SOC graphs. All other classes follow directly, since they are known to be superclasses of circle graphs.
	\end{proof}
	
	\begin{figure}[tbp]
		\centering
		\includegraphics{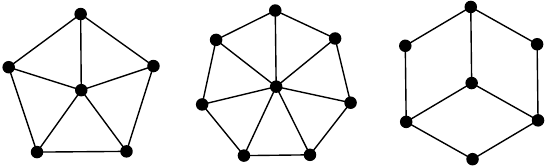}
		\caption{From left to right: the graphs $W_5$, $W_7$ and $BW_3$ used in Def.~\ref{def:bouchet}.}
		\label{fig:bouchet_forbidden.}
	\end{figure}
	
	\section{Omitted Proofs from Section~\ref{sec:cliquewidth}}
	\obscw*
	\begin{proof}
	Consider an arbitrary $k$-expression of $H$ which ends by setting all labels in $H$ to $1$. Now adjust the $k$-expression as follows: whenever a vertex in $V_1$ receives a label $i$, replace it with $i+k$, and whenever a vertex in $V_2$ receives a label $i$, replace it with $i+2k$, and use a special label $3k+1$ for $s$. This new $(3k+1)$-expression constructs $H$ and assigns all vertices in $V\setminus (V_1\cup V_2\cup \{s\})$, $V_1$, $V_2$ and $\{s\}$ the labels $1$, $k+1$, $2k+1$, and $3k+1$, respectively. To complete our construction, we merely permute the labels as required.
	\end{proof}

	\thmcw
	\noindent \textit{Proof.}
	Let us consider an arbitrary tree-like $\Delta$-SOC graph $G=(V,E)$ and let us fix a tree-like strict outerconfluent drawing $ D = (N,J,\Gamma) $ of $G$; let $\Gamma_c$ be the set of arcs with at least one endpoint in $J$. Based on $D$, we partition $E$ into the edge sets $E_c$ and $E_s$ as above. Let $V_c$ be the set of vertices incident to at least one edge in $E_c$ and let $G_c=(V_c,E_c)$.
	
	Note that $D_c=(N,J,\Gamma_c)$ is topologically equivalent to a tree (plus some singletons in $V \setminus V_c$), and let us choose an arbitrary arc in $\Gamma_c$ as the \emph{root} $r$. Our aim will be to pass through $D_c$ in a leaves-to-root manner so that at each step we construct a $16$-expression for a certain circular segment of the outer face. This way, we will gradually build up the $16$-expression for $G$ from modular parts, and once we reach the root we will have a complete $16$-expression for $G$.
	
	Our proof will perform induction along a notion of \emph{depth}, which is tied to the tree-like structure of $D_c$. We say that each node corresponding to a vertex in $V_c$ has \emph{depth} $0$, and we define the depth of each junction $j$ as follows: $j$ has depth $\ell$ if $\ell$ is the minimum integer such that at least two of the arcs incident to $j$ lead to junctions or nodes of depth  at most $\ell-1$ (for example, a junction of depth $1$ has $2$ arcs leading to nodes, while a junction of depth $2$ has at least one arc leading to a junction of depth $1$ and the other arc leads to either a node or another junction of depth $1$). We call an arc between a junction $j$ of depth $i$ and a junction (or node) of depth smaller than $i$ a \emph{down-arc} for $j$.

	Another notion we will use is that of a \emph{region}, see Fig.~\ref{fig:cliquewidth_appendix}: the region defined by a junction $j$ and one of its down-arcs $a$ is the segment of the boundary of the outer face delimited by the ``right-most'' and ``'left-most'' paths (not necessarily smooth), which leave $j$ through $a$.
	Crucially, we observe that the set $V_R$ of all vertices corresponding to nodes in a region $R$ can be partitioned into the following four groups:
	\begin{enumerate}
	\item[\textbf{A.}] one vertex on the left border of $R$;
	\item[\textbf{B.}] up to one vertex that is not on the left border but on the right border of $R$;
	\item[\textbf{C.}] vertices not on the border which have no neighbors outside of $R$;
	\item[\textbf{D.}] vertices not on the border which have at least one neighbor outside of $R$;
	\end{enumerate}
	and furthermore we observe that all vertices of group \textbf{D} have precisely the same neighborhood outside of $R$ (in particular, they must all have a path to $j$ which forms a smooth curve). In the degenerate case of nodes (which have depth $0$), we say that the region is merely the point of that node (and the corresponding vertex then belongs to group \textbf{A}).
	
	\begin{figure}[tbp]
		\centering
			\includegraphics[scale=1]{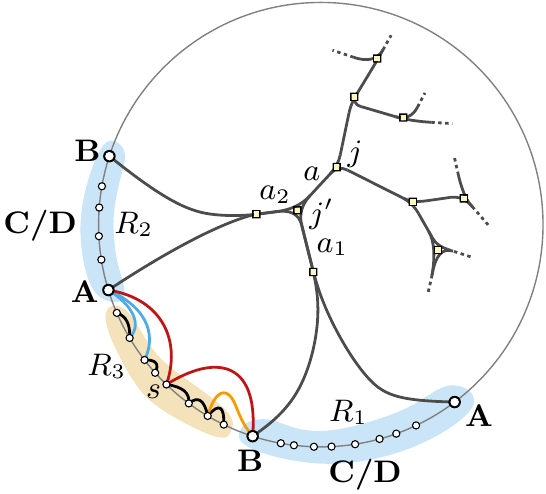}
		\caption{Sketch of a tree-like $\Delta$-SOC graph $G$ with the regions and junctions used in the inductive construction of the 16-expression defining $G$.}
		\label{fig:cliquewidth_appendix}
	\end{figure}

	As the first step of our procedure, for each $v\in V_c$ we create a $1$-expression $1(v)$ (i.e., we create each vertex in $V_c$ as a singleton). For the second step, we apply induction along the depth of junctions as follows. As our inductive hypothesis at step $i$, we assume that for each junction $j'$ of depth at most $i-1$ and each of its down-arcs defining a region $R'$, there exists a $16$-expression which constructs $G[V_{R'}]$ and labels $V_{R'}$ by using labels $1,2,3,4$ for vertices in groups $\textbf{A}$, $\textbf{B}$, $\textbf{C}$, $\textbf{D}$, respectively. We observe that the inductive hypothesis holds at step $1$: indeed, all regions at depth $0$ consist of a node, and we already created the respective $1$-expressions for all such nodes. 
	
	Our aim is now to use the inductive hypothesis for $i$ to show that the inductive hypothesis also holds for $i+1$---in other words, we need to obtain a $16$-expression which constructs and correctly labels the graph $G[V_{R}]$ for the region $R$ defined by each junction $j$ of depth $i$ and down-arc $a$. Assume that $a$ is incident to a junction $j'$ with down-arcs $a_1$ and $a_2$, defining the regions $R_1$ and $R_2$, respectively. By our inductive assumption, $G[V_{R_1}]$ and $G[V_{R_2}]$ both admit a $16$-expression which labels the vertices based on their group in the desired way. Now observe that $R$ is composed of the following parts: region $R_1$ on the ``left'', region $R_2$ on the ``right'', and a segment $R_3$ on the boundary of the outer face between $R_1$ and $R_2$. Crucially, we make the following observations for vertices $V_{R_3}$ in $R_3$:
	\begin{itemize}
	\item none of the vertices in $V_{R_3}$ are incident to an edge in $E_c$;
	\item $G[V_{R_3}]$ is outerplanar;
	\item at most one vertex, denoted $s$, in $V_{R_3}$ has two neighbors outside of $V_{R_3}$, notably the rightmost vertex in $V_{R_1}$ and the leftmost vertex in $V_{R_2}$;
	\item all vertices other than $s$ in $V_{R_3}$ either have no neighbors outside of $V_{R_3}$, or have one neighbor outside of $V_{R_3}$---in particular, either the rightmost vertex in $V_{R_1}$ or the leftmost vertex in $V_{R_2}$.
	\end{itemize}
	
	At this point, we can finally invoke Observation~\ref{obs:cw}. In particular, since $G[V_{R_3}]$ is outerplanar, it has clique-width at most $5$, and by using the observation we can construct a $16$-expression which labels all vertices adjacent to the right border of $R_1$ with label $1$, all vertices adjacent to the left border of $R_2$ with label $2$, vertex $s$ with label $3$, and all other vertices with label $4$. Now all that remains is to: 
	\begin{enumerate}
	\item relabel labels $1$--$4$ used in the $16$-expression for $R_2$ to labels $5$--$8$ and the labels $1$--$4$ used in the $16$-expression for $R_3$ to labels $9$--$12$, respectively;
	\item use the $\oplus$ operator to merge these $16$-expressions, 
	\item use the $\eta_{i,j}$ operator to add edges between $V_{R_1} \cup V_{R_2}$ and $V_{R_3}$ as required, in particular: $\eta_{9,2}$, $\eta_{11,2}$, $\eta_{10,5}$, $\eta_{11,5}$;
	\item use the $\eta_{4,8}$ operator to add all pairwise edges between the groups \textbf{D} of $V_{R_1}$ and $V_{R_2}$ in case junction $j'$ smoothly connects arcs $a_1$ and $a_2$;
	\item use the $p_{i\rightarrow j}$ operator to relabel as required by the inductive assumption, where depending on the junction type of $j'$ group \textbf{D} of $V_R$ either consist of the union of the groups \textbf{D} of $V_{R_1}$ and $V_{R_2}$ or it is identical to group \textbf{D} of just one of them. Group \textbf{A} coincides with group \textbf{A} of $V_{R_1}$ and group \textbf{B} coincides with group \textbf{B} of $V_{R_2}$. The remaining vertices form group \textbf{C}.
	\end{enumerate}
	
	The inductive procedure described above runs until it reaches the root arc $r$, and it is easy to observe that at this point we have constructed two $16$-expressions corresponding to the two regions, say $R_1^*$ and $R_2^*$, defined by paths which start at $r$ and go in the two possible directions. The two remaining regions on the outer face between $R_1^*$ and $R_2^*$ are then handled completely analogously as the regions denoted $R_3$ in our inductive step. Hence we conclude that there indeed exists a $16$-expression which constructs $G$.
	\hfill \qed

\end{document}